\newtheorem{theorem}{Theorem}
\newtheorem{definition}[theorem]{Definition}
\newcommand{\hp}{\ensuremath{\mathrm{hp}}}
\newcommand{\Sym}{\ensuremath{\mathrm{Sym}}}
\newcommand{\sgn}{\ensuremath{\mathrm{sgn}}}
\newcommand{\inin}{\ensuremath{\mathrm{in}}}
\newcommand{\outin}{\ensuremath{\mathrm{out}}}
\title{Directed Hamiltonicity and Out-Branchings \\ via Generalized Laplacians}
\author[1]{Andreas Bj\"orklund  }
\author[2]{Petteri Kaski}
\author[3]{Ioannis Koutis}
\affil[1]{\small Department of Computer Science, Lund University,  andreas.bjorklund@yahoo.se}
\affil[2]{Department of Computer Science, Aalto University, petteri.kaski@aalto.fi }
\affil[3]{Computer Science Department, UPR Rio Piedras, ioannis.koutis@upr.edu}
\date{}
\begin{document}

\maketitle 

\setlength\parindent{0pt} 

\renewcommand{\labelenumi}{\alph{enumi}.} 

\begin{abstract}
\noindent
We are motivated by a tantalizing open question in exact algorithms: can we detect whether an $n$-vertex directed graph $G$ has a Hamiltonian cycle in time significantly less than $2^n$?

\smallskip
\noindent We present new randomized algorithms that improve upon several previous works:
\smallskip 
\begin{enumerate}
\item
We show that for any constant $0<\lambda<1$ and prime $p$ we can \textbf{count} the Hamiltonian cycles modulo $p^{\lfloor (1-\lambda)\frac{n}{3p}\rfloor}$ in expected time 
less than $c^n$ for a constant $c<2$ that depends only on $p$ and $\lambda$.
Such an algorithm was previously known only for the case of 
counting modulo {\em two}~[Bj\"orklund and Husfeldt,~FOCS~2013].
\item We show that we can detect a Hamiltonian cycle in  $O^*(3^{n-\alpha(G)})$ time and \textbf{polynomial space}, where $\alpha(G)$ is the size of the maximum independent set in $G$. In particular, this yields an $O^*(3^{n/2})$ time algorithm for bipartite directed graphs, which is faster than the exponential-space algorithm in [Cygan~{\em et~al.},~STOC~2013]. 
\end{enumerate}
\smallskip
Our algorithms are based on the algebraic combinatorics of ``incidence assignments'' that we can capture through evaluation of determinants of Laplacian-like matrices, inspired by the Matrix--Tree Theorem for directed graphs. In addition to the novel algorithms for directed Hamiltonicity, we use the Matrix--Tree Theorem to derive simple algebraic algorithms for detecting out-branchings. Specifically, we give an $O^*(2^k)$-time randomized algorithm for detecting out-branchings with at least $k$ internal vertices, improving upon the algorithms of [Zehavi, ESA~2015] and [Bj\"orklund~{\em et~al.}, ICALP~2015]. We also present an algebraic algorithm for the directed $k$-Leaf problem, based on a non-standard monomial detection problem.

\end{abstract}

\section{Introduction}

The Hamiltonian cycle problem 
has played a prominent role in development of techniques for the
design of {\em exact algorithms} for hard problems.
The early $O^*(2^n)$ algorithms based on dynamic programming
and inclusion-exclusion~\cite{Bellman1962,KGK77,Karp1982},
remained un-challenged for several decades. In 2010,
Bj\"orklund~\cite{determinant2014bjorklund},
gave a randomized algorithm running in $O(1.657^n)$ time
for the case of undirected graphs. The algorithm
taps into the power of algebraic combinatorics, and in particular
determinants that enumerate cycle covers. 

Despite this progress in the undirected
Hamiltonian cycle problem, a substantial improvement in the more general directed
version of the problem remains an open problem
and a key challenge in the area of exact algorithms. 
The currently best known general algorithm
runs in $O^*(2^{n-\Theta(\sqrt{n/\log n})})$ time~\cite{B16}, and there are no known
connections with the theory of SETH-hardness~\cite{Impagliazzo2001}
that would---at least partly---dash the hope for a faster algorithm. 

A number of recent works have attempted to crack directed Hamiltonicity,
revealing that the problem is indeed easier in certain restricted settings. 
Cygan and Pilipczuk~\cite{faster2013cygan} showed
that the problem admits an $O^*(2^{(1-\epsilon_d)n})$ time
algorithm  for graphs with average degree bounded by $d$,
where $\epsilon_d$ is a constant with a doubly exponential
dependence on~$d$.
Cygan {\em et~al.}~\cite{fast2014cygan} showed that the problem admits 
an $O^*(1.888^n)$ time randomized algorithm for bipartite graphs and that 
the parity of directed Hamiltonian cycles can also be
computed within the same time bound. Bj\"orklund and 
Husfeldt~\cite{parity2013bjorklund} showed
that the parity of Hamiltonian cycles can be computed in $O^*(1.619^n)$ 
randomized time
in general directed graphs. Finally, 
Bj\"orklund~{\em et~al.}~\cite{Bjorklund2015}
showed that the problem can be solved in $O^*((2-\Theta(1))^n)$ time when
the graph contains less than $1.038^n$ Hamiltonian cycles, via a reduction
to the parity problem. In this paper we improve or generalize all of these 
works. 

\subparagraph*{Our results.} 
As one would expect, all recent
``below-$2^n$'' algorithm designs for the Hamiltonicity problem 
rely on algebraic combinatorics and involve 
formulas that enumerate Hamiltonian cycles. But somewhat surprisingly, 
none of these approaches employs the directed version of the Matrix--Tree 
Theorem~(see e.g.~Gessel and Stanley~\cite[\S11]{Gessel1995}), one of 
the most striking and beautiful results in algebraic graph theory.
The theorem enables the enumeration of spanning out-branchings, 
that is, rooted spanning trees with all arcs oriented away from the root, 
via a determinant polynomial. Our results in this paper derive from 
a detailed combinatorial understanding and generalization of this 
classical setup.

The combinatorial protagonist of this paper is the following notion
that enables a ``two-way'' possibility to view each arc in a directed graph:

\begin{definition}[Incidence assignment]
Let $G$ be a directed graph with vertex set $V$ and arc set $E$. 
For a subset $W\subseteq V$ we say that a mapping $\mu:W\rightarrow E$ 
is an {\em incidence assignment} if for all $u\in W$ it holds that 
$\mu(u)$ is incident with $u$.
\end{definition}
In particular, looking at a single arc $uv\in E$, an incidence assignment 
$\mu$ can assign $uv$ in two%
\footnote{Strictly speaking we are here assuming that both $u\in W$
and $v\in W$. To break symmetry in our applications we do allow also 
situations where $uv$ has only one possible assignment due to 
either $u\notin W$ or $v\notin W$.}{}
possible ways: as an out-arc $\mu(u)=uv$ at $u$, 
or as an in-arc $\mu(v)=uv$ at $v$. 

From an enumeration perspective 
the serendipity of this ``two-way'' possibility to assign an arc becomes 
apparent when one considers how an incidence assignment $\mu$ can realize 
a directed cycle in its image $\mu(W)$. Indeed, let 
\[
u_1u_2,\ u_2u_3,\ \ldots,\ u_{\ell-1}u_{\ell},\ u_{\ell}u_1\in E
\]
be the arcs of a directed cycle $C$ of length $\ell\geq 2$ in $G$
with $V(C)\subseteq W$. It is immediate that there are now exactly two%
\footnote{Again strictly speaking it will be serendipitous to break 
symmetry so that certain cycles will have only one realization 
instead of two.}{} 
ways to realize $C$ in the image $\mu(W)$. Namely, we can realize $C$  
either (i) using only in-arcs with 
\begin{equation}
\label{eq:in-realization}
\mu(u_1)=u_{\ell}u_1,\quad
\mu(u_2)=u_1u_2,\quad
\mu(u_3)=u_2u_3,\quad
\ldots,\quad
\mu(u_{\ell})=u_{\ell-1}u_{\ell}\,,
\end{equation}
or (ii) using only out-arcs with 
\begin{equation}
\label{eq:out-realization}
\mu(u_1)=u_1u_2,\quad
\mu(u_2)=u_2u_3,\quad
\mu(u_3)=u_3u_4,\quad
\ldots,\quad
\mu(u_{\ell})=u_{\ell}u_1\,.
\end{equation}
Incidence assignments thus enable two distinct ways to realize 
a {\em directed} cycle. Furthermore, it is possible to {\em switch} 
between \eqref{eq:in-realization} and \eqref{eq:out-realization} so that
only the images of $u_1,u_2,\ldots,u_{\ell}$ under $\mu$ are affected.
The algebraization of this combinatorial observation is at the heart
of the directed Matrix--Tree Theorem (which we will review for convenience
of exposition in Sect.~\ref{sect:laplacian}) and all of our results in 
this paper. 

Our warmup result involves a generalization of the directed Hamiltonian
path problem, namely the $k$-{\em Internal Out-Branching} problem, 
where the goal is to detect whether a given directed 
graph contains a spanning out-branching that has at least $k$
internal vertices. This is a well-studied problem on its own, with several 
successive improvements the latest of which is an $O^*(3.617^k)$ algorithm
by Zehavi~\cite{Zehavi14} and an $O^*(3.455^k)$
algorithm by Bj\"orklund~{\em et~al.}~\cite{DBLP:conf/icalp/BjorklundKKZ15} 
for the undirected version of the problem. 

Using a combination of the directed Matrix--Tree Theorem and a 
monomial-sieving idea due to Floderus~{\em et~al.}~\cite{FloderusLPS15}, 
in Sect.~\ref{sect:branchings} we show the following:

\begin{theorem}[Detecting a $k$-Internal Out-Branching]
	\label{thm:k-internal-out-branchings}
	There exists a randomized algorithm that solves the $k$-internal 
	out-branching problem in time $O^*(2^k)$ and with negligible
	probability of reporting a false negative.
\end{theorem}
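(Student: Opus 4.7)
The plan is to encode the spanning out-branchings of $G$ into the monomials of a polynomial $P$ computable by a polynomial-size arithmetic formula, in such a way that the set of distinct variables appearing in a monomial equals the set of internal vertices of the corresponding out-branching, and then to invoke the $O^*(2^k)$ monomial-sieve of Floderus~\emph{et~al.}~\cite{FloderusLPS15} that detects a monomial with at least $k$ distinct variables.

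Concretely, I would introduce an indeterminate $x_u$ for each vertex $u \in V(G)$ and an indeterminate $y_{uv}$ for each arc $uv \in E(G)$, and weight the arc $uv$ by the product $x_u y_{uv}$. Weighting by the \emph{tail} variable $x_u$ is motivated by the observation that $u$ is internal in an out-branching $T$ exactly when $u$ is the tail of at least one arc of $T$. I would then form the weighted Laplacian $L$ associated with these arc weights, arranged (as reviewed in Sect.~\ref{sect:laplacian}) so that the directed Matrix--Tree Theorem counts out-branchings. For each $r \in V(G)$, the determinant of the reduced matrix $L^{(r)}$ obtained by deleting the row and column indexed by $r$ then satisfies
\[
\det L^{(r)} \;=\; \sum_{T} \prod_{uv \in T} x_u\, y_{uv},
\]
where the sum ranges over all spanning out-branchings $T$ of $G$ rooted at $r$. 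Setting $P(\mathbf{x},\mathbf{y}) = \sum_{r \in V(G)} \det L^{(r)}$, the $y$-part of $T$'s contribution is a multilinear monomial that uniquely determines the arc set of $T$, so distinct branchings yield distinct monomials and cannot cancel; the $x$-support of $T$'s monomial is precisely the set of internal vertices of $T$. Hence $P$ has a monomial with at least $k$ distinct $x$-variables if and only if $G$ has a spanning out-branching with at least $k$ internal vertices.

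Finally, I would feed $P$ to the Floderus~\emph{et~al.} sieve, treating the $y$-indeterminates as auxiliary variables or substituting them by independent random values in the algebraic domain the sieve operates over; this yields a detection algorithm running in $O^*(2^k)$ arithmetic operations. Correctness holds with high probability by a Schwartz--Zippel argument that combines the sieve's internal randomness with the $y$-randomization. The main obstacle in the write-up will be to verify that the sieve interfaces cleanly with the determinant polynomial in its target algebra (typically a characteristic-two group algebra for monomial testing): one must check that no two branchings cancel spuriously after substitution, which is guaranteed by the pairwise distinctness of the $y$-monomials, and that $\det L^{(r)}$ can be evaluated efficiently over that algebra, which follows since determinants admit polynomial-size arithmetic formulas (e.g.\ via Berkowitz's or Mahajan--Vinay's algorithm).
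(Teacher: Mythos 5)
Your proposal is correct and follows essentially the same route as the paper: encode out-branchings via the directed Matrix--Tree Theorem with each arc $uv$ carrying a tail-vertex marker and an arc fingerprint, iterate over roots, and then detect a monomial with at least $k$ distinct vertex-variables using the Floderus~\emph{et~al.} substitution combined with $O^*(2^k)$ multilinear monomial detection. The paper merely unpacks the black box you invoke, substituting $x_{uv}=(1+ty_u)z_{uv}$ into the punctured Laplacian and extracting the coefficient of $t^k$ before applying the standard multilinear sieve, which is exactly the mechanism your argument relies on.
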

In Appendix~\ref{appendix:k-leaf} we give a further application for the $k$-Leaf problem, 
that is, detecting a spanning out-branching with at least $k$ leaves. 
We note that Gabizon~{\em et~al.}~\cite{Gabizon2015} 
have recently given another application of the directed Matrix--Tree Theorem 
for the problem of detecting out-branchings of bounded degree. 

Proceeding to our two main results, 
in Sect.~\ref{sect:modular} we observe that the directed Matrix--Tree Theorem 
leads to a formula for computing the number of Hamiltonian paths in arbitrary
characteristic by using a standard inclusion--exclusion approach, 
which leads to a formula that involves the summation of $2^n$ determinants. 
To obtain a below-$2^n$ design, we present a way to randomize the underlying 
Laplacian matrix so that the number of Hamiltonian paths does not change 
but in expectation most of the summands vanish modulo a prime power.
Furthermore, to efficiently list the non-vanishing terms, we use a variation 
of an algorithm of Bj\"orklund~{\em et~al.}~\cite{BHL15} that was used for 
a related problem, computing the permanent modulo a prime power. 
This leads to our first main result:

\begin{theorem} [Counting directed Hamiltonian cycles modulo a prime power]
    \label{thm:counting-modulo-prime-power}
For all $0<\lambda<1$ there exists a randomized algorithm that,
given an $n$-vertex directed graph and a prime $p$ as input,
counts the number of Hamiltonian cycles modulo 
$p^{\lfloor (1-\lambda)n/(3p)\rfloor}$ in expected time 
$O^*\bigl(2^{n(1-\lambda^2/(19p\log_2 p))}\bigr)$.
The algorithm uses exponential space.
\end{theorem}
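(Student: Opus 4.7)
The plan is to combine three ingredients: a $2^n$-term formula derived from the directed Matrix--Tree Theorem via inclusion--exclusion, a random perturbation of the generalized Laplacian that kills most summands modulo $p^k$ (with $k=\lfloor(1-\lambda)n/(3p)\rfloor$), and a sieve in the style of Bj\"orklund--Husfeldt--Lokshtanov~\cite{BHL15} to enumerate the surviving summands in time below $2^n$.

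First, I would use the incidence-assignment viewpoint of Sect.~\ref{sect:laplacian}. Fixing a root arc $e_0$ breaks the two-realization symmetry on any cycle through $e_0$, so the generalized Laplacian determinant on a vertex subset $W\subseteq V$ enumerates, with signs, incidence assignments $\mu\colon W\to E$ whose image stays inside the subgraph induced by $W$ and uses $e_0$. Standard M\"obius inversion over the subset lattice then yields
\[
\#\{\text{Hamiltonian cycles through } e_0\} \;=\; \sum_{W\subseteq V}(-1)^{|V\setminus W|}\det L_W,
\]
and summing over $e_0$ with the standard correction for the $n$-fold overcount of each cycle gives the total Hamiltonian cycle count as a signed sum of $2^n$ determinants.

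Next, I would randomize the Laplacian by assigning to each vertex $u$ an independent uniform weight $x_u\in\mathbb{Z}/p^{k+1}\mathbb{Z}$ and scaling rows/columns so that switching between \eqref{eq:in-realization} and \eqref{eq:out-realization} on a cycle $u_1\cdots u_\ell$ multiplies the contribution by $x_{u_1}\cdots x_{u_\ell}$; the two realizations then combine into a factor $x_{u_1}\cdots x_{u_\ell}+1$ that is generic in the $x_u$'s. Hamiltonian cycles pick up a single global factor that is invariant under the randomization, so their total contribution is preserved exactly. For other subsets $W$, the determinant collects a product of independent cyclic factors, each vanishing modulo $p$ with probability $\Theta(1/p)$. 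A moment calculation should then bound the expected number of \emph{heavy} subsets $W$---those with $\det L_W\not\equiv 0\pmod{p^k}$---by $2^{n(1-\lambda^2/(19p\log_2 p))}$.

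Finally, to list the heavy subsets without scanning all $2^n$, I would adapt the sieve of~\cite{BHL15}: branch over vertices, expand the determinant by multilinearity, and prune any partial assignment whose accumulated $p$-adic valuation has already exceeded $k$. This visits essentially only the heavy subsets, each processed in $\mathrm{poly}(n)$ time. The main obstacle is the randomization step: designing a perturbation that simultaneously (a) preserves the exact Hamiltonian count modulo $p^k$ and (b) delivers $p$-adic concentration with the precise constant $\lambda^2/(19p\log_2 p)$. Achieving this balance---between the moment bound on a single subset's valuation and the entropy cost of enumerating cycle-poor subsets---is also what pins down the $1/(3p)$ factor in the modulus exponent.
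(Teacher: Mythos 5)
Your high-level plan (an inclusion--exclusion formula with $2^n$ determinant terms, a randomization that kills most terms modulo $p^k$, and a listing procedure for the survivors) matches the paper's, but the two steps that actually carry the proof are either wrong as stated or left open. First, the sieve formula: your alternating sum over induced vertex subsets $W$, with a fixed root arc $e_0$ breaking the two-realization symmetry, does not follow from the Matrix--Tree Theorem. In arbitrary characteristic the sign cancellation in $\det L_r$ leaves only the \emph{acyclic} incidence assignments (spanning out-branchings); no choice of a distinguished arc makes the determinant enumerate Hamiltonian cycles, and an alternating sum of out-branching counts of induced subgraphs just re-counts spanning out-branchings rather than isolating Hamiltonian cycles. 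The paper instead splits one vertex into $s$ and $t$, so that Hamiltonian cycles become $s$--$t$ Hamiltonian paths, i.e.\ spanning out-branchings rooted at $s$ in which every vertex of $V_t$ has out-degree exactly $1$; the inclusion--exclusion is then over the set $O$ of vertices permitted positive out-degree (substituting $x_{uv}=0$ for $u\in V_t\setminus O$), which yields \eqref{eq:hamiltonian-path-sieve} and is valid in any characteristic.

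Second, the randomization and the listing, which you yourself flag as ``the main obstacle,'' are exactly what a proof must supply. The paper's randomization does not touch cycle factors at all: it sets $x_{uv}=1$ for all arcs with $u\neq t$ and assigns independent uniform values $x_{tu}\in\{0,1,\dots,p-1\}$ to arcs leaving $t$. Such arcs occur in no $s$--$t$ Hamiltonian path, so the sieve still evaluates to the exact count with probability $1$; yet $x_{tu}$ sits in the diagonal entry of each row $u\in V_{st}\setminus O$ of $L_s^O$, and these rows are zero off the diagonal, so each is divisible by $p$ independently with probability $1/p$, and $\det L_s^O\equiv 0\pmod{p^k}$ unless at most $k$ such rows are divisible. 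Your per-vertex scaling with ``independent cyclic factors'' has no justification (the perturbed determinants do not factor that way), and exact preservation of the count is delicate unless, as above, the randomness lives only on arcs no Hamiltonian path can use. Finally, your branch-and-prune listing by accumulated $p$-adic valuation is unanalyzed: a partial assignment's valuation does not control that of its completions, so there is no bound tying the search tree to the number of heavy subsets. The paper instead uses a meet-in-the-middle listing: bipartition $V_t$, reduce heaviness of $O=O_1\cup O_2$ to ``at most $k$ agreements'' between the vectors $z^{O_1}$ and $z^{O_2}$, split the coordinates into $b=\lfloor 3\log_2 p\rfloor$ blocks with precomputed lookup tables, and treat balanced and unbalanced $O$ separately with a Chernoff bound; it is this machinery (with $\beta=1/6$) that produces the exponent $1/(3p)$ in the modulus and the constant $\lambda^2/(19p\log_2 p)$ in the running time.
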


A corollary of Theorem~\ref{thm:counting-modulo-prime-power}
is that if $G$ has at most $d^n$ Hamiltonian cycles, we can detect
one in time $O(c_d^n)$, where $d$ is any fixed constant
and $c_d<2$ is a constant that only depends on $d$. As a further 
corollary we obtain a randomized algorithm for counting Hamiltonian cycles 
in graphs of bounded average (out-)degree $d$ in $O(2^{(1-\epsilon_d)n})$ 
time. The constant $\epsilon_d$ has a polynomial dependency in $d$. 
Previous algorithms had a constant $\epsilon_d$ with 
an exponential dependency on $d$~\cite{traveling2012bjorklund,faster2013cygan}.
(The proofs of these results are relegated to Appendix~\ref{appendix:counting-modulo-prime-power}.)

Returning to undirected Hamiltonicity, a key to the algorithm 
in~\cite{determinant2014bjorklund} was the observation that determinants
enumerate all non-trivial cycle covers an even number of times. This is 
due to the fact that each undirected cycle can be traversed in both directions. 
By picking a special vertex, one can break symmetry
and force this to happen only for non-Hamiltonian cycle covers,
so that the corresponding monomials cancel in characteristic 2.
In Sect.~\ref{sect:quasi-laplacian} we present
a ``quasi-Laplacian'' matrix whose determinant
enables a similar approach for the directed case via algebraic
combinatorics of incidence assignments, and furthermore enables one
to accommodate a speedup assuming the existence of a good-sized 
independent set. We specifically prove the following as our second main
result:

\begin{theorem}[Detecting a directed Hamiltonian cycle]
	\label{thm:directed-hamiltonicity}
	There exists a randomized algorithm that solves the directed Hamiltonian cycle 
	problem on a given directed graph $G$ with a maximum independent set of size $\alpha(G)$, in 
	$O^*(3^{n(G)-\alpha(G)})$ time, polynomial space and with negligible probability
	of reporting a false negative.
\end{theorem}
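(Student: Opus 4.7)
The plan is to use the quasi-Laplacian determinant of Sect.~\ref{sect:quasi-laplacian} as the algebraic engine, extend the in-realization/out-realization cancellation idea so that only Hamiltonian contributions survive in characteristic~$2$, and then evaluate the surviving polynomial in polynomial space by branching over the vertex set outside a maximum independent set $I$.

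First, I would construct a quasi-Laplacian matrix $L = L(G; x)$ with formal variables attached to vertices, in the spirit of the undirected determinant construction but adapted to the directed setting via incidence assignments. The determinant $\det L$ expands as a signed weighted sum over incidence assignments $\mu$. Every directed cycle $C$ of length $\geq 2$ admits the two realizations shown in \eqref{eq:in-realization} and \eqref{eq:out-realization}, which differ precisely on $V(C)$ and carry identical monomials; hence they cancel modulo~$2$. After breaking symmetry by distinguishing a root vertex, this cancellation kills every non-Hamiltonian configuration while leaving exactly one surviving monomial per Hamiltonian cycle. Hamiltonicity then becomes equivalent to nonvanishing of $\det L$ as a polynomial, which can be tested in polynomial space by a Schwartz--Zippel evaluation over a sufficiently large extension of $\mathbb{F}_2$, computed by a single Gaussian elimination on the random instantiation.

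To push the running time below $2^n$, I would reorganize $\det L$ by conditioning on the incidence-assignment role of each vertex. For every vertex three alternatives arise: the vertex lies outside the effective support, it is assigned an in-arc, or it is assigned an out-arc; vertex-by-vertex enumeration would naively give $3^n$ terms. The independent-set savings come from the fact that, since $I$ carries no internal arcs, the three alternatives at each $I$-vertex can be collapsed analytically: with the branching already fixed on $V \setminus I$, the residual contribution from the $I$-rows of $L$ reduces to a bipartite computation between $I$ and the fixed boundary on $V \setminus I$, evaluable as a single determinant of size at most $|I|$ in polynomial time. Only the $n - \alpha(G)$ vertices in $V \setminus I$ then require explicit three-way branching, and summing the polynomial-time contributions over the $3^{n - \alpha(G)}$ configurations recovers $\det L$ at the random point.

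The principal obstacle will be to engineer the quasi-Laplacian so that (i) the three-state decomposition on a single vertex matches the algebraic expansion of $\det L$ exactly, with no additional cross-terms, and (ii) the $I$-block of $L$ truly decouples into a polynomial-time subroutine for every fixed branching on $V \setminus I$. The symmetry-breaking root must be placed in $V \setminus I$ so that the characteristic-$2$ cancellation argument survives the reorganization; a mis-placed root could cancel Hamiltonian contributions along with the unwanted non-Hamiltonian ones. Verifying these two points in concert, while keeping the Schwartz--Zippel error bound negligible via a large enough field extension, is where the main technical work lies.
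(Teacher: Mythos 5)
There is a genuine gap, and it sits exactly where you defer the ``principal obstacle.'' Your first paragraph claims that, after fixing a root, the in/out-realization cancellation \eqref{eq:in-realization}--\eqref{eq:out-realization} inside a \emph{single} quasi-Laplacian determinant kills every non-Hamiltonian configuration, so that Hamiltonicity is equivalent to nonvanishing of $\det L$, testable by one Gaussian elimination plus Schwartz--Zippel. That cannot be correct: it would be a randomized polynomial-time test for directed Hamiltonicity. The switching argument only pairs up realizations of configurations that are already cycle covers with at least two cycles; it does nothing about the bulk of the incidence assignments in the determinant expansion, namely those in which some vertex has in-degree $0$ or out-degree $0$ in $\mu(V)$. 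In the paper these are eliminated not inside one determinant but by the sieve \eqref{eq:main-sieve}: one sums $\det Q^{I,O,s}$ over all pairs $I,O\subseteq B$ with $I\cup O=B$, so that an assignment with zero-in-degree set $Z_\inin^\mu$ and zero-out-degree set $Z_\outin^\mu$ is counted $2^{|Z_\inin^\mu|+|Z_\outin^\mu|}$ times and cancels in characteristic $2$ unless both sets are empty; only then does the $2^{c-1}$ coefficient for $c$-cycle covers finish the job. This sieve is also the true source of the $3^{n-\alpha(G)}$ bound: each vertex of $B$ has the three states ``in $I$ only,'' ``in $O$ only,'' ``in both.'' Your three states (``outside the effective support / assigned an in-arc / assigned an out-arc'') do not correspond to any term-by-term decomposition of a determinant --- every row contributes exactly one arc --- so the identity your branching is supposed to compute is never established.

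The second missing piece is the mechanism that exempts the independent set from branching. In the paper this is not a separate ``bipartite determinant of size $|I|$ computed per branch''; rather, each independent vertex $y$ gets \emph{two columns} $y_\inin$ and $y_\outin$ of the same $n\times n$ quasi-Laplacian \eqref{eq:quasi-laplacian}, so that bijectivity of the permutation in the determinant forces in-degree $1$ and out-degree $1$ at every $y$ with no sieving at all --- which is exactly why the sieve may range over $B$ only, and which works precisely because $Y$ carries no internal arcs. Your proposal names this decoupling as a requirement ((i) and (ii) in your last paragraph) but supplies no construction meeting it, and your root-placement remark ($s\in V\setminus I$, i.e.\ $s\in B$) is the only part of that engineering you actually pin down. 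So the architecture you sketch is the right one, but the two constructions that make it sound --- the $(I,O)$-sieve with its parity cancellation of degree-deficient assignments, and the doubled in/out columns for the independent set --- are exactly what is absent, and without them the argument does not go through.
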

Theorem~\ref{thm:directed-hamiltonicity} improves and generalizes 
the exponential-space algorithm of~Cygan~{\em et~al.}~\cite{fast2014cygan}.

\subparagraph*{Terminology and conventions.}

All graphs in this paper are directed and without loops and parallel arcs
unless indicated otherwise. 
For an arc $e$ starting from vertex $u$ and ending at vertex $v$ we say that $u$ is the {\em tail} of $e$ and $v$ is the {\em head} of $e$. 
The vertices $u$ and $v$ are the {\em ends} of $e$. 
A directed graph is {\em connected} if the 
undirected graph obtained by removing orientation from the arcs is connected. 
A subgraph of a graph is {\em spanning} if the subgraph has the same set 
of vertices as the graph. A connected directed graph is an {\em out-branching}
if every vertex has in-degree $1$ except for the {\em root} vertex that has
in-degree $0$. We say that a vertex is {\em internal} to an out-branching 
if it has out-degree at least $1$; otherwise the vertex is a {\em leaf} of the
out-branching. The ({\em directed}) {\em Hamiltonian cycle} problem 
asks, given a directed graph $G$ as input, whether $G$ has a spanning 
directed cycle as a subgraph. The notation $O^*(\ )$ suppresses a 
multiplicative factor polynomial in the input size. We say that an
event parameterized by $n$ has {\em negligible} probability if 
the probability of the event tends to zero as $n$ grows without bound.

\section{The symbolic Laplacian of a directed graph}

\label{sect:laplacian}

This section develops the relevant preliminaries on directed graph Laplacians.

\subparagraph*{Permutations and the determinant.}
A bijection $\sigma:U\rightarrow U$ of a finite set $U$ is called
a {\em permutation} of $U$. A permutation $\sigma$ {\em moves} an
element $u\in U$ if $\sigma(u)\neq u$; otherwise $\sigma$ {\em fixes} $u$.
The {\em identity} permutation fixes every element of $U$. 
A permutation $\sigma$ of $U$ is a 
{\em cycle} of {\em length} $k\geq 2$ if there exist distinct 
$u_1,u_2,\ldots,u_k\in U$ with 
$\sigma(u_1)=u_2,\,\sigma(u_2)=u_3\,,\ldots,\,\sigma(u_{k-1})=u_k,\,\sigma(u_k)=u_1$
and $\sigma$ fixes all other elements of $U$. Two cycles are {\em disjoint}
if every point moved by one is fixed by the other. 
The set of all permutations of $U$ forms
the {\em symmetric group} $\Sym(U)$ with the composition of mappings 
as the product operation of the group. Every nonidentity permutation factors 
into a unique product of pairwise disjoint cycles. The {\em sign} of a 
permutation $\sigma$ that factors into $c$ disjoint cycles of lengths
$k_1,k_2,\ldots,k_c$ is $\sgn(\sigma)=(-1)^{\sum_{j=1}^c (k_j-1)}$. 
The sign of the identity permutation is $1$.

The {\em determinant} of a square matrix $A$ with rows and columns indexed
by $U$ is the multivariate polynomial
\[
\det A=\sum_{\sigma\in\Sym(U)}\sgn(\sigma)\prod_{u\in U}a_{u,\sigma(u)}\,.
\]

\subparagraph*{The punctured Laplacian determinant via incidence assignments.}
Let $G$ be a directed graph with $n$ vertices. 
Associate with each arc $uv\in E=E(G)$ an indeterminate $x_{uv}$. 
The {\em symbolic Laplacian} $L=L(G)$ of $G$ is the $n\times n$ matrix
with rows and columns indexed by the vertices $u,v\in V=V(G)$ and
the $(u,v)$-entry defined%
\footnote{Recall that we assume that $G$ is loopless so the entries 
with $u=v$ are well-defined.}{}
by
\begin{equation}
\label{eq:laplacian}
\ell_{uv}=
\begin{cases}
\sum_{w\in V:wu\in E} x_{wu} & \text{if $u=v$};\\
-x_{uv}                      & \text{if $uv\in E$};\\
0                            & \text{if $u\neq v$ and $uv\notin E$}.
\end{cases}
\end{equation}
Observe that for each $v\in V$ we have that column $v$ of $L$ sums to
zero because the diagonal entries cancel the negative off-diagonal entries.
Furthermore, for each $u\in V$ we have that the monomials on 
row $u$ of $L$ correspond to the arcs incident to $u$. 
Indeed, each monomial at the diagonal corresponds to an
in-arc to $u$, and each monomial at an off-diagonal entry corresponds
to an out-arc from $u$. Thus, selecting one monomial from each row 
corresponds to selecting an incidence assignment.

To break symmetry, select an $r\in V$. 
The symbolic Laplacian of $G$ {\em punctured at} $r$ is 
obtained from $L$ by deleting both row $r$ and column $r$. 
We write $L_r=L_r(G)$ for the symbolic Laplacian of $G$ punctured 
at $r$. Let us write $\mathscr{B}_r=\mathscr{B}_r(G)$ for the set of all 
spanning out-branchings of $G$ with root $r\in V$. The following theorem is 
well-known (see e.g.~Gessel and Stanley~\cite[\S11]{Gessel1995}) and is
presented here for purposes of displaying a proof that presents the
cancellation argument using incidence assignments.

\begin{theorem}[Directed Matrix--Tree Theorem]
\label{thm:multivariate-out-branching}
$\det L_r=\sum_{H\in\mathscr{B}_r}\prod_{uv\in E(H)} x_{uv}$.
\end{theorem}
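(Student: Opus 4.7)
The plan is to prove the identity by expanding $\det L_r$ via the Leibniz formula and then showing that all contributions except those from spanning out-branchings rooted at $r$ cancel in pairs, via a sign-reversing involution that exploits the ``two-way'' character of incidence assignments advertised by (\ref{eq:in-realization}) and (\ref{eq:out-realization}).

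First I would expand the determinant
\[
\det L_r=\sum_{\sigma\in\Sym(V\setminus\{r\})}\sgn(\sigma)\prod_{u\in V\setminus\{r\}}(L_r)_{u,\sigma(u)}
\]
and then distribute the diagonal sums in (\ref{eq:laplacian}). The resulting expansion is indexed by pairs $(\sigma,\mu)$, where $\mu:V\setminus\{r\}\to E$ is an incidence assignment with $\mu(u)$ an in-arc of $u$ whenever $\sigma(u)=u$ and $\mu(u)=u\sigma(u)$ whenever $\sigma(u)\neq u$. Note $\sigma$ is determined by $\mu$: its fixed points are precisely the vertices where $\mu$ chooses an in-arc, and its non-fixed values come from out-arcs. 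A short check shows $(\sigma,\mu)$ gives a valid Leibniz term iff, on the set $S$ of ``out-type'' vertices (those $u$ with $\mu(u)$ an out-arc), the assignment $u\mapsto\text{head of }\mu(u)$ is a permutation of $S$; equivalently, $S$ decomposes into vertex-disjoint directed cycles, each realized via (\ref{eq:out-realization}). Accounting for the $-1$ on each off-diagonal entry against $\sgn(\sigma)$, the signed monomial contribution of $(\sigma,\mu)$ simplifies to $(-1)^c\prod_{u\in V\setminus\{r\}}x_{\mu(u)}$, where $c$ is the number of non-trivial cycles of $\sigma$.

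Next I would identify the ``good'' terms. I claim that if $\mu$ is such that the underlying multiset of chosen arcs is the edge set of a spanning out-branching $H\in\mathscr{B}_r$, then $\sigma$ must be the identity and $\mu$ must be the unique assignment in which every non-root $v$ chooses its unique in-arc in $H$. This follows by propagating from $r$: every child-arc of $r$ in $H$ is incident only to one vertex in $V\setminus\{r\}$, so it is forced to be an in-type choice; induction down $H$ gives uniqueness. This contributes exactly $+\prod_{uv\in E(H)}x_{uv}$, matching the desired right-hand side. It remains to show the non-branching terms cancel.

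The main step, and where the work is, is constructing a sign-reversing involution on the non-branching incidence assignments. Given such a $\mu$, observe that the selected subgraph always contains a ``flippable'' directed cycle $C\subseteq V\setminus\{r\}$, meaning either (a) all vertices of $C$ are out-type and $\mu$ realizes $C$ as in (\ref{eq:out-realization}), or (b) all vertices of $C$ are in-type and $\mu$ realizes $C$ as in (\ref{eq:in-realization}). Indeed, in case $S\neq\emptyset$ any cycle of $\sigma|_S$ is of type (a); and in case $S=\emptyset$ (so $\sigma=\text{id}$), the chosen in-arcs form $n-1$ arcs with each non-root having in-degree exactly one, so either the subgraph is an out-branching rooted at $r$ (excluded by assumption) or it contains a directed cycle, which is automatically of type (b). To make the choice canonical, fix once and for all a total order on $V$, and let $C^*(\mu)$ be the flippable cycle containing the smallest vertex that lies in any flippable cycle of $\mu$. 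The involution sends $\mu$ to the assignment $\mu'$ obtained by reassigning the arcs of $C^*(\mu)$ from type (a) to type (b) or vice versa, leaving $\mu$ unchanged elsewhere. The monomial $\prod x_{\mu(u)}$ is preserved (the same arcs are selected), while $c$ changes by exactly $\pm 1$, so the sign flips. The crux is verifying that $C^*(\mu')=C^*(\mu)$, which I expect to be the main obstacle: one must argue that flipping the type of a single canonical cycle disturbs neither the set of flippable cycles disjoint from $C^*(\mu)$ nor the status of $C^*(\mu)$ itself as ``the one containing the smallest flippable vertex.'' This follows because the flip alters $\mu$ only on $V(C^*(\mu))$ and the arcs incident to $V(C^*(\mu))$ within $C^*(\mu)$, so all other flippable cycles of $\mu'$ coincide with those of $\mu$, and $C^*(\mu)$ itself remains flippable in $\mu'$ (in the opposite type). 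Since the smallest ``flippable'' vertex is unchanged, $C^*$ is preserved and the involution is well-defined. Summing over the paired terms, every non-branching contribution cancels, leaving exactly $\sum_{H\in\mathscr{B}_r}\prod_{uv\in E(H)}x_{uv}$ as claimed.
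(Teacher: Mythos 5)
Your proposal is correct and follows essentially the same route as the paper's own proof: expand $\det L_r$ into signed incidence assignments, note that each spanning out-branching is realized by a unique all-in-arc assignment contributing $+1$, and cancel all remaining terms by flipping a canonically chosen cycle between the in-realization \eqref{eq:in-realization} and the out-realization \eqref{eq:out-realization}. The only cosmetic difference is that the paper groups terms by the underlying in-degree-one subgraph $H\in\mathscr{H}_r$ and flips the least cycle of $H$, whereas you flip the flippable cycle through the smallest flippable vertex and check directly that the flip preserves the set of flippable cycles.
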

\begin{proof}
Let us abbreviate $V_r=V(G)\setminus\{r\}$ and study the determinant
\begin{equation}
\label{eq:det-lr}
\det L_r=\sum_{\sigma\in\Sym(V_r)}\sgn(\sigma)\prod_{u\in V_r}\ell_{u,\sigma(u)}\,.
\end{equation}
In particular, let us fix an arbitrary permutation $\sigma\in\Sym(V_r)$
and study the monomials of the polynomial $\prod_{u\in V_r}\ell_{u,\sigma(u)}$
with the assumption that this polynomial is nonzero. 
From \eqref{eq:laplacian} it is immediate for each $u\in V_r$ 
that $\ell_{u,\sigma(u)}$ expands either (i) to the diagonal sum 
$\sum_{w\in V:wu\in E} x_{wu}$, which happens precisely when $\sigma$ 
fixes $u$ with $\sigma(u)=u$, or (ii) to the off-diagonal $-x_{uv}$, 
which happens precisely when $\sigma$ moves $u$ with $\sigma(u)=v$. 

Let us write $M(\sigma)$ for the set of all incidence assignments 
$\mu:V_r\rightarrow E$ with the properties that (i) each $u\in V_r$ 
fixed by $\sigma$ is assigned to an in-arc $\mu(u)=wu\in E$
for some $w\in V$, and (ii) each $u\in V_r$ moved by $\sigma$ is 
assigned to the unique out-arc $\mu(u)=uv\in E$ with $\sigma(u)=v$.
Let us write $f=f(\sigma)$ for the number of elements in $V_r$ fixed 
by $\sigma$. It is immediate by (i) and (ii) that we have
\begin{equation}
\label{eq:m-sigma}
\prod_{u\in V_r}\ell_{u,\sigma(u)}=
\sum_{\mu\in M(\sigma)}(-1)^{n-1-f(\sigma)}\prod_{u\in V_r}x_{\mu(u)}\,.
\end{equation}
Next observe that from $\mu$ we can reconstruct $\sigma=\sigma(\mu)$ 
by (i) setting $\sigma(u)=u$ for each $u$ assigned to an in-arc in $\mu$, and 
(ii) setting $\sigma(u)=v$ for each $u$ assigned to an out-arc $uv$ in $\mu$.
Thus the union $M=\bigcup_{\sigma\in Sym(V_r)}M(\sigma)$ is disjoint. 
Let us call the elements of $M$ {\em proper} incidence assignments. 
By \eqref{eq:det-lr} and \eqref{eq:m-sigma} we have
\begin{equation}
\label{eq:det-lr-mu}
\det L_r=\sum_{\mu\in M}
(-1)^{n-1-f(\sigma)}\sgn(\sigma(\mu))\prod_{u\in V_r}x_{\mu(u)}\,.
\end{equation}

We claim that an incidence assignment $\mu$ is proper if and only if
for every $u\in V_r$ there is exactly one $u'\in V_r$ such that
$\mu(u')$ is an in-arc to $u$. For the ``only if'' direction, let $\sigma$
be the permutation underlying a proper $\mu$, and observe that
vertices moved by $\sigma$ partition to cycles so a $\sigma$ never moves a 
vertex to a fixed vertex. Thus, we have $u'=u$ for the points fixed by
$\sigma$, and $u'=\sigma^{-1}(u)$ is the vertex preceding $u$ along 
a cycle of $\sigma$ for points moved by $\sigma$. For the ``if'' direction, 
define $\sigma(u)=u$ if $u=u'$ and $\sigma(u')=u$ if $u'\neq u$. In
the latter case we have $\mu(u')=u'u$, which means that $u''\neq u'$
and thus $\sigma(u'')=u'$; by uniqueness of $u'$ eventually a cycle must
close so $\sigma$ is a well-defined permutation underlying $\mu$ and thus
$\mu$ is proper.

Let us write $\mathscr{H}_r$ for the set of all spanning subgraphs of $G$ 
with the property that every vertex in $V_r$ has in-degree $1$ and 
the root $r$ has in-degree $0$.
From the previous claim it follows that we can view the set 
$\mu(V_r)=\{\mu(u):u\in V_r\}$ for a proper $\mu$ as an
element of $\mathscr{H}_r$. Furthermore, $\mu(V_r)$ is connected 
(and hence a spanning out-branching with root $r$) if and only if $\mu(V_r)$ 
is acyclic. 

Consider an arbitrary $H\in\mathscr{H}_r$. If $H$ has a 
cycle, let $C$ be the least cycle in $H$ according to some fixed but
arbitrary ordering of the vertices of $G$. 
(Observe that any two cycles in $H$ must be vertex-disjoint and
cannot traverse $r$ because $r$ has in-degree $0$.)
Now consider an arbitrary proper $\mu$ that realizes $H$ by $\mu(V_r)=H$. 
The cycle $C$ is realized in $\mu$ by either \eqref{eq:in-realization}
(in which case $\sigma(\mu)$ fixes all vertices in $C$), or 
\eqref{eq:out-realization} (in which case $\sigma(\mu)$ traces the cycle $C$).
Furthermore, we may switch between realizations \eqref{eq:in-realization} and 
\eqref{eq:out-realization} so that the number of fixed points in the 
underlying permutation changes by $|V(C)|$ and the sign of the underlying 
permutation gets multiplied by $(-1)^{|V(C)|-1}$. It follows that the 
realizations \eqref{eq:in-realization} and \eqref{eq:out-realization}
have different signs and thus cancel each other in \eqref{eq:det-lr-mu}.
If $H$ does not have a cycle, that is, $H\in\mathscr{B}_r$, it follows 
that there is a unique proper $\mu$ that realizes $H$. 
Indeed, first observe that $H$ can be realized only by assigning in-arcs 
since any assignment of an out-arc in $\mu$ implies a cycle in $H=\mu(V_r)$, 
a contradiction. Second, the in-arcs are unique since each $u\in V_r$ has
in-degree $1$ in $H$. Finally, since $\mu$ assigns only in-arcs the
underlying permutation $\sigma(\mu)$ is the identity permutation which
has $\sgn(\sigma(\mu))=1$ and $(-1)^{n-1-f(\sigma(\mu))}=1$. Thus, each
acyclic $H$ contributes to \eqref{eq:det-lr-mu} through a single
$\mu\in M$ with coefficient $1$. The theorem follows.
\end{proof}

\section{Corollary for $k$-internal out-branchings}

\label{sect:branchings}

This section proves Theorem~\ref{thm:k-internal-out-branchings}.
We rely on a substitution idea of 
Floderus~{\em et~al.}~\cite[Theorem~1]{FloderusLPS15}
to detect monomials with at least $k$ distinct variables. 

Let $G$ be an $n$-vertex directed graph given as input together with 
a nonnegative integer $k$. 
Without loss of generality we may assume that $k\leq n-1$.
Iterate over all choices for a root vertex $r\in V$.
Introduce an indeterminate $y_u$ for each vertex $u\in V$
and an indeterminate $z_{uv}$ for each arc $uv\in E$. 
Introduce one further indeterminate $t$. 
Construct the symbolic Laplacian $L$ of $G$ given 
by \eqref{eq:laplacian} and with the assignment 
$x_{uv}=(1+ty_u)z_{uv}$ to the indeterminate $x_{uv}$ for each $uv\in E$.
Puncture $L$ at $r$ to obtain $L_r$. Using, for example, Berkowitz's 
determinant circuit design~\cite{Berkowitz1984} for an arbitrary
commutative ring with unity, in time $O^*(1)$ build an arithmetic 
circuit $\mathscr{C}$ of size $O^*(1)$ for $\det L_r$. Viewing $\det L_r$ 
as a multivariate polynomial over the polynomial ring 
$R[t,y_u,z_{uv}:u\in V,uv\in E]$ where $R$ is an abstract ring with unity, 
from Theorem~\ref{thm:multivariate-out-branching}
it follows that $G$ has a spanning out-branching rooted at $r$ with at least
$k$ internal vertices if and only if the coefficient of $t^k$ in $\det L_r$ 
(which is a polynomial that is either identically zero 
or both (i) homogeneous of degree $k$ in the 
indeterminates $y_u$ and (ii) homogeneous of degree $n-1$ in 
the indeterminates $z_{uv}$)
has a monomial that is multilinear of degree $k$ in the 
indeterminates $y_u$.
Indeed, observe that the substitution $x_{uv}=(1+ty_u)z_{uv}$
tracks in the degree of the indeterminate $y_u$ whether $u$ occurs as an
internal vertex or not; the indeterminates $z_{uv}$ make sure that distinct
spanning out-branchings will not cancel each other. 

To detect a multilinear monomial in $\mathscr{C}$ 
restricted to the coefficient of $t^k$ we can
invoke~\cite[Lemma~1]{Bjorklund2016} or~\cite[Lemma~2.8]{Koutis2016}.
This results in a randomized algorithm that runs in time $O^*(2^k)$ and has a negligible
probability of reporting a false negative. 
This completes the proof of Theorem~\ref{thm:k-internal-out-branchings}. 
$\qed$


\section{Modular counting of Hamiltonian cycles}

\label{sect:modular}

This section proves Theorem~\ref{thm:counting-modulo-prime-power}.
Fix an arbitrary constant $0<\lambda<1$. 
Let $0<\beta<1/2$ be a constant whose precise value is fixed later. 
Let $p$ be a prime and let $G$ be an $n$-vertex directed graph with 
vertex set $V$ and arc set $E$ given as input.
Without loss of generality (by splitting any vertex $u$ into two vertices, 
$s$ and $t$, with $s$ receiving the out-arcs from $u$, and $t$ 
receiving the in-arcs to $u$) we may count the spanning paths starting 
from $s$ and ending at $t$ instead of spanning cycles.
Similarly, without loss of generality we may assume that $2\leq p<n$.
(Indeed, for $p\geq n$ the counting outcome from 
Theorem~\ref{thm:counting-modulo-prime-power} is trivial.)

\subparagraph*{Sieving for Hamiltonian paths among out-branchings.}
Let $s,t\in V$ be distinct vertices. Let us write $\hp(G,s,t)$ for the
set of spanning directed paths that start at $s$ and end at $t$ in $G$.
Recall that we write $V_t=V\setminus\{t\}$ for the $t$-punctured version
of the vertex set $V$. Let us also write $V_{st}=V\setminus\{s,t\}$.
For $O\subseteq V_t$, let $L_s^O$ be the matrix obtained from the
Laplacian \eqref{eq:laplacian} by first puncturing at $s$ and then 
substituting $x_{uv}=0$ for all arcs $uv\in E$ with $u\in V_t\setminus O$. 
Since a path $P\in \hp(G,s,t)$ is precisely a spanning out-branching rooted 
at $s$ such that every vertex $u\in V_t$ has out-degree $1$, we have, by 
Theorem~\ref{thm:multivariate-out-branching} and the principle of inclusion 
and exclusion, 
\begin{equation}
\label{eq:hamiltonian-path-sieve}
\sum_{P\in\hp(G,s,t)}\prod_{uv\in E(P)}x_{uv}
=\sum_{O\subseteq V_t}(-1)^{|V_t\setminus O|}\det L_s^O\,.
\end{equation}
In particular observe that \eqref{eq:hamiltonian-path-sieve} holds in any
characteristic. 

\subparagraph*{Cancellation modulo a power of $p$.}
With foresight, select $k=\lfloor (1-\lambda)(1/2-\beta)n/p\rfloor$.
Our objective is next to show that by carefully injecting entropy into
the underlying Laplacian we can, in expectation and working modulo $p^k$, 
cancel all but an exponentially negligible fraction of 
the summands on the right-hand side of \eqref{eq:hamiltonian-path-sieve}. 
Furthermore, we can algorithmically narrow down to the nonzero terms,
leading to an exponential improvement to $2^n$. 

Let us assign $x_{uv}=1$ for all $uv\in E$ with 
$u\neq t$. Since no spanning path that ends at $t$ may contain an arc
$tu\in E$ for any $u\in V_t$, we may without loss of generality assume
that $G$ contains all such arcs, and assign, independently and uniformly
at random $x_{tu}\in\{0,1,\ldots,p-1\}$. Thus, the summands $\det L_s^O$
for $O\subseteq V_t$ are now integer-valued random variables and 
\eqref{eq:hamiltonian-path-sieve} evaluates to $|\hp(G,s,t)|$ 
with probability $1$.

Let us next study a fixed $O\subseteq V_t$.
Let $F_O$ be the event that $L_s^O$ has no more than 
$k$ rows where each entry is divisible by $p$. In particular, 
$\det L_s^O\not\equiv 0\pmod{p^k}$ implies $F_O$.
To bound the probability of $F_O$ from above, 
observe that $L_s^O$ is identically zero at each row 
$u\in V_{st}\setminus O$ except possibly at the diagonal entries.
Furthermore, because of the random assignment to the indeterminates $x_{tu}$, 
each diagonal entry at these rows is divisible by $p$ with probability $1/p$. 
Let us take this intuition and turn it into a listing algorithm
for (a superset of the) sets $O\subseteq V_t$ that satisfy $F_O$.

\subparagraph*{Bipartitioning.}
For listing we will employ a meet-in-the-middle approach based on building
each set $O\subseteq V_t$ from two parts using the following bipartitioning. 
Let $V_t^{(1)}\cup V_t^{(2)}=V_t$ be a bipartition with 
$|V_t^{(1)}|=\lceil n/3\rceil$ and $|V_t^{(2)}|=n-1-\lceil n/3\rceil$.
Associate with each $O_1\subseteq V_t^{(1)}$ a vector 
$z^{O_1}\in\{0,1,\ldots,p-1,\infty\}^{V_{st}}$ with the entry at 
$u\in V_{st}$ defined by
\begin{equation}
\label{eq:z1}
z^{O_1}_u=
\begin{cases}
\infty & \text{if $u\in O_1$};\\
\bigl(x_{tu}+\sum_{w\in O_1:wu\in E} x_{wu}\bigr) \bmod{p} & \text{otherwise}.
\end{cases}
\end{equation}
Similarly, associate with each $O_2\subseteq V_t^{(1)}$ a vector 
$z^{O_2}\in\{0,1,\ldots,p-1,\infty\}^{V_{st}}$ with the entry at $u\in V_{st}$ 
defined by
\begin{equation}
\label{eq:z2}
z^{O_2}_u=
\begin{cases}
\infty & \text{if $u\in O_2$};\\
\bigl(-\sum_{w\in O_2:wu\in E} x_{wu}\bigr) \bmod{p} & \text{otherwise}.
\end{cases}
\end{equation}
Suppose now that we have $O_1\subseteq V_t^{(1)}$ and $O_2\subseteq V_t^{(2)}$ 
with $O=O_1\cup O_2$.
We claim that $F_O$ holds only if the vectors $z^{O_1}$ and $z^{O_2}$ 
agree in at most $k$ entries. Indeed, observe that 
$z^{O_1}_u=z^{O_2}_u$ holds only if both $u\in V_{st}\setminus O$
and the $(u,u)$-entry of $L_s^O$ is divisible by $p$. That is, 
$z^{O_1}_u=z^{O_2}_u$ implies the entire row $u$ of $L_s^O$ consists 
only of elements divisible by $p$. 
Thus it suffices to list all pairs $(O_1,O_2)$ such that $z^{O_1}$ 
and $z^{O_2}$ have at most $k$ agreements.

\subparagraph*{Balanced and unbalanced sets.}

To set up the listing procedure, let us now partition the index domain $V_{st}$ 
of our vectors into $b=\lfloor 3\log_2 p\rfloor$ pairwise disjoint sets 
$S_1,S_2,\ldots,S_b$ such that we have 
$\lfloor (n-2)/b\rfloor\leq|S_i|\leq \lceil(n-2)/b\rceil$.

Let us split the sets $O\subseteq V_t$ into two types.
Let us say that $O$ is {\em balanced} if 
$(1/2-\beta)n/b\leq |(V_{st}\setminus O)\cap S_i|\leq (1/2+\beta)n/b$ 
holds for all $i=1,2,\ldots,b$; otherwise $O$ is {\em unbalanced}. 
Recalling that $\sum_{j=0}^\ell\binom{n}{j}\leq 2^{nH(\ell/n)}$ holds for
all integers $1\leq \ell\leq n/2$, where 
$H(\rho)=-\rho\log_2\rho-(1-\rho)\log_2(1-\rho)$ is the binary entropy 
function, observe that there are in total at most 
\begin{equation}
\label{eq:unbalanced-bound}
\begin{split}
2^{n+1-\min_i|S_i|}b\sum_{j=0}^{\lceil(1/2-\beta)n/b\rceil}\tbinom{\lfloor n/b+2\rfloor}{j}
&\leq 2^{n-(n-2)/b+3}2^{(n/b+2)H(1/2-\beta)}b\\
&\leq 2^{n(1-(1-H(1/2-\beta))/b)+7}b
\end{split}
\end{equation}
sets $O$ that are unbalanced. 

\subparagraph*{Precomputation and listing.}

Suppose that $O_1\subseteq V_t^{(1)}$ and $O_2\subseteq V_t^{(1)}$ are 
{\em compatible} in the sense that $z^{O_1}$ and $z^{O_2}$ agree in at 
most $k$ entries. For $S\subseteq V_{st}$ and a vector $z$ whose entries 
are indexed by $V_{st}$, let us write $z_{S}$ for the restriction of $z$ to $S$.
If $O_1$ and $O_2$ are compatible, then by an averaging argument
there must exist an $i=1,2,\ldots,b$ such that $z^{O_1}_{S_i}$ 
and $z^{O_2}_{S_i}$ agree in at most $k/b$ entries. In particular, this 
enables us to iterate over $O_2$ and list all compatible 
$O_1$ by focusing only on each restriction to $S_i$ for 
$i=1,2,\ldots,b$. Furthermore, the search inside $S_i$ can be precomputed
to look-up tables. Indeed, for each $i=1,2,\ldots,b$ and each key 
$g\in\{0,1,\ldots,p-1,\infty\}^{S_i}$, let us build a complete list
of all subsets $O_1\subseteq V_t^{(1)}$ such that $z^{O_1}_{S_i}$ 
and $g$ agree in at most $k/b$ entries. 
These $b$ look-up tables can be built by processing in total at most
\[
 \sum_{i=1}^b 2^{V_t^{(1)}}(p+1)^{|S_i|}
 \leq 2^{n/3+7}2^{(n/(\lfloor 3\log_2 p\rfloor)+2)\log_2(p+1)}\log_2 p
 = O(2^{0.87n})
\]
pairs $(O_1,g)$. This takes time $O^*(2^{0.87n})$ in total.

The main listing procedure now considers each $O_2\subseteq V_t^{(2)}$ 
in turn, and for each $i=1,2,\ldots,b$ consults the look-up table 
for direct access to all $O_1$ such that $z^{O_1}_{S_i}$ and 
$z^{O_2}_{S_i}$ agree in at most $k/b$ entries. In particular this will
list all compatible pairs $(O_1,O_2)$ and hence all 
sets $O=O_1\cup O_2$ such that $F_O$ holds. 

\subparagraph*{Expected running time.}
Let us now analyze the expected running time of the algorithm. 
We start by deriving an upper bound for the expected number of pairs 
$(O_1,O_2)$ considered by the main listing procedure.
First, observe that the total number of pairs $(O_1,O_2)$ considered
by the procedure with $O=O_1\cup O_2$ unbalanced is bounded from
above by our upper bound \eqref{eq:unbalanced-bound} for the total number
of unbalanced $O$. Indeed, $O_1=O\cap V_t^{(1)}$ and $O_2=O\cap V_t^{(2)}$ 
are uniquely determined by $O$.

Next, for a pair $(O_1,O_2)$ with balanced $O=O_1\cup O_2$ 
and $i=1,2,\ldots,b$, let $G_{O_1,O_2,i}$ be 
the event that $z^{O_1}_{S_i}$ and $z^{O_2}_{S_i}$ agree in at most $k/b$ 
entries. We seek an upper bound for the probability of $G_{O_1,O_2,i}$ to
obtain an upper bound for the expected number of pairs with balanced 
$O=O_1\cup O_2$ considered by the main listing procedure. 
Let $A_{O_1,O_2,i}$ be the number of entries in which $z^{O_1}_{S_i}$ 
and $z^{O_2}_{S_i}$ agree. 
We observe that $A_{O_1,O_2,i}$ is binomially distributed with expectation 
$|(V_t\setminus O)\cap S_i|/p$. Since $O$ is balanced, we have 
$(1/2-\beta)n/b\leq|(V_t\setminus O)\cap S_i|\leq (1/2+\beta)n/b$. 
We also recall that $k=\lfloor (1-\lambda)(1/2-\beta)n/p\rfloor$.
A standard Chernoff bound now gives
\[
\begin{split}
\Pr(G_{O_1,O_2,i})
&\leq\Pr\bigl(A_{O_1,O_2,i}\leq k/b\bigr)\\
&\leq\Pr\bigl(A_{O_1,O_2,i}\leq (1-\lambda)|(V_t\setminus O)\cap S_i|/p\bigr)\\
&\leq\exp\bigl(-\lambda^2|(V_t\setminus O)\cap S_i|/(2p)\bigr)\\
&\leq\exp\bigl(-\lambda^2(1/2-\beta)n/(2pb)\bigr)\,.
\end{split}
\]
Recalling that $b=\lfloor 3\log_2 p\rfloor$, the main listing procedure 
thus considers in expectation at most
$2^n\exp\bigl(-\lambda^2(1/2-\beta)n/(2p(3\log_2 p)\bigr)$
pairs $(O_1,O_2)$ with $O=O_1\cup O_2$ balanced. 
Recalling our upper bound for the total number of unbalanced sets 
\eqref{eq:unbalanced-bound}, we thus have that the main listing procedure runs 
in $O^*\bigl(%
2^n\exp\bigl(-\lambda^2(1/2-\beta)n/(2p(3\log_2 p))\bigr)+
2^{n(1-(1-H(1/2-\beta))/(3\log_2 p))}\bigr)$ expected time. 
Recalling that precomputation runs in $O^*(2^{0.87n})$ time, 
we thus have for $\beta=1/6$ 
that the entire algorithm runs in $O^*(2^{n(1-\lambda^2/(19p\log_2 p))})$ 
expected time and computes $|\hp(G,s,t)|$ modulo 
$p^{\lfloor(1-\lambda)n/(3p)\rfloor}$.
This completes the proof of Theorem~\ref{thm:counting-modulo-prime-power}. 
$\qed$

\section{Directed Hamiltonicity via quasi-Laplacian determinants}

\label{sect:quasi-laplacian}

This section proves Theorem~\ref{thm:directed-hamiltonicity}.
Let $G$ be a directed $n$-vertex graph given as input.

\subparagraph*{Finding a maximum independent set.}
Let $B\cup Y=V(G)$ be a partition of the
vertex set into two disjoint sets $B$ (``blue'') and $Y$ (``yellow'') such 
that no arc has both of its ends in $Y$. That is, $Y$ is an 
independent set. 

We can find an $Y$ of the maximum possible size as follows. 
First, in time polynomial in $n$ compute the maximum-size matching in 
the undirected graph obtained from $G$ by disregarding the orientation of 
the arcs. This maximum-size matching must consist of at least 
$\lfloor n/2\rfloor$ edges or $G$ does not admit a Hamiltonian cycle. 
(Indeed, from a Hamiltonian
cycle we can obtain a matching with $\lfloor n/2\rfloor$ edges by taking
every other arc in the cycle.) Since for each matching edge it holds that
both ends of the edge cannot be in an independent set, we can in time
$O^*(3^{n/2})$ find a maximum-size independent set $Y$ of $G$. 
Furthermore, $\alpha(G)=|Y|\leq \lfloor n/2\rfloor+1$, so we are 
within our budget of $O^*(3^{n(G)-\alpha(G)})$ in terms of running time.
In fact, $|Y|\leq\lfloor n/2\rfloor$ or otherwise $G$ trivially does 
not admit a Hamiltonian cycle.

\subparagraph*{The symbolic quasi-Laplacian.}
We will first define the quasi-Laplacian and then give intuition for its
design. Let us work over a field of characteristic $2$. 
For each $y\in Y$ introduce a copy $y_\inin$ and let $Y_\inin$ be the set
of all such copies. Similarly, for each $y\in Y$ introduce a copy $y_\outin$ 
and let $Y_\outin$ be the set of all such copies. We assume that $Y_\inin$
and $Y_\outin$ are disjoint. For each $j\in Y_\inin\cup Y_\outin$ let us 
write $\underline{j}\in Y$
for the underlying element of $Y$ of which $j$ is a copy.
Let $B_*$ be a set of $n-2|Y|$ elements that is 
disjoint from both $Y_\inin$ and $Y_\outin$. 
For each $uv\in E$ and each $j\in B_*\cup Y_\inin\cup Y_\outin$, 
introduce an indeterminate $x_{uv}^{(j)}$.

Select an arbitrary vertex $s\in B$ for purposes of breaking symmetry
and let $I,O\subseteq B$. 
The {\em quasi-Laplacian} 
$Q^{I,O,s}=Q^{I,O,s}(G)$ of $G$ with {\em skew at} $s$ be the $n\times n$ 
matrix whose rows are indexed by 
$u\in B\cup Y$ 
and whose columns are indexed by 
$j\in B_*\cup Y_\inin\cup Y_\outin$
such that the $(u,j)$-entry is defined by
\begin{equation}
\label{eq:quasi-laplacian}
q^{I,O,s}_{uj}=
\begin{cases}
\sum_{w\in O:wu\in E,u\in I}x_{wu}^{(j)}&\\
\ \ \ +\sum_{w\in I:uw\in E,u\in O\setminus\{s\}}x_{uw}^{(j)}
& \text{(a) if $u\in B$ and $j\in B_*$};\\
x_{u\underline{j}}^{(j)}
& \text{(b) if $u\in O\setminus\{s\}$ and $j\in Y_\inin$ with $u\underline{j}\in E$};\\
x_{\underline{j}u}^{(j)}
& \text{(c) if $u\in I$ and $j\in Y_\outin$ with $\underline{j}u\in E$};\\
\sum_{w\in O:wu\in E}x_{wu}^{(j)}
& \text{(d) if $u\in Y$ and $j\in Y_\inin$ with $u=\underline{j}$};\\
\sum_{w\in I:uw\in E}x_{uw}^{(j)}
& \text{(e) if $u\in Y$ and $j\in Y_\outin$ with $u=\underline{j}$};\\
0
& \text{otherwise}.
\end{cases}
\end{equation}
Let us next give some intuition for \eqref{eq:quasi-laplacian} before
proceeding with the proof. 

Analogously to the Laplacian \eqref{eq:laplacian}, 
the quasi-Laplacian \eqref{eq:quasi-laplacian} has been designed so 
that the monomials of each row $u\in B\cup Y$ of $Q^{I,O,s}$ control 
the assignment of either an in-arc or an out-arc to $u$ in an incidence
assignment, and the skew at $s$ is used to break symmetry so that $s$ is 
always assigned an in-arc to $s$. In particular, without the skew at 
row $s$ and with $I=O$, each column of $Q^{I,O,s}$ would sum to zero, 
in analogy with the (non-punctured) Laplacian. 


Let us now give intuition for the columns $j\in Y_\inin$ and $j\in Y_\outin$.
First, observe by (b) and (d) in \eqref{eq:quasi-laplacian} that
selecting a monomial from column $j\in Y_\inin$ corresponds to making sure that 
the in-degree of $\underline j$ is $1$. Such a selection may be either 
a ``quasi-diagonal'' assignment of the in-arc $w\underline{j}$ to 
$u=\underline j\in Y$ via (d) for some $w\in B$; 
or an ``off-diagonal'' assignment of the out-arc $u\underline{j}$ 
to $u\in B$ via (b). 
Second, observe by (c) and (e) in \eqref{eq:quasi-laplacian} that
selecting a monomial from column $j\in Y_\outin$ corresponds to making 
sure that the out-degree of $\underline j$ is $1$.
{\em Thus, the columns $j\in Y_\inin\cup Y_\outin$ enable us to
make sure that an incidence assigment has both in-degree $1$ and out-degree
$1$ at each $u\in Y$ without the use of sieving. This gives us the speed-up
from $O^*(3^n)$ to $O^*(3^{n-|Y|})$ running time.} Observe also that the
structure for the quasi-Laplacian $Q^{I,O,s}$ is enabled precisely because
$Y$ is an independent set and thus no arc contributes to both in-degree
and out-degree in $Y$.

\subparagraph*{The quasi-Laplacian determinant sieve.}
Recalling that we are working over a field of characteristic 2, 
let us study the sum
\begin{equation}
\label{eq:main-sieve}
\sum_{\substack{I,O\subseteq B\\I\cup O=B\\s\in I}}
\det Q^{I,O,s}
=
\sum_{\substack{I\subseteq B}}
\sum_{\substack{O\subseteq B}}
\det Q^{I,O,s}
=\!\!\!\!
\sum_{\substack{\sigma:B\cup Y\rightarrow B_*\cup Y_\inin\cup Y_\outin\\\text{$\sigma$ bijective}}}
\ 
\sum_{\substack{I\subseteq B}}
\sum_{\substack{O\subseteq B}}
\prod_{u\in B\cup Y}q^{I,O,s}_{u,\sigma(u)}\,.
\end{equation}
Observe that the first equality in \eqref{eq:main-sieve} holds 
because $Q^{I,O,s}$ has by \eqref{eq:quasi-laplacian} an identically zero 
row unless $I\cup O=B$ and $s\in I$; the second equality holds by definition
of the determinant in characteristic $2$ and changing the order of summation.
From \eqref{eq:quasi-laplacian} and the right-hand side of 
\eqref{eq:main-sieve} it is immediate that \eqref{eq:main-sieve} is either 
identically zero or a homogeneous polynomial of degree $n$ in the $n|E|$ 
indeterminates $x_{uv}^{(j)}$ for $j\in B_*\cup Y_\inin\cup Y_\outin$ 
and $uv\in E$.%
\footnote{%
Our algorithm for deciding Hamiltonicity will naturally not work with 
a symbolic representation of \eqref{eq:main-sieve} but rather in 
a homomorphic image under a random evaluation homomorphism.}
We claim that \eqref{eq:main-sieve} is not identically zero if and only if
$G$ admits at least one spanning cycle. Furthermore, each spanning cycle 
in $G$ defines precisely $|B_*|!$ distinct monomials in \eqref{eq:main-sieve}.

To establish the claim, fix a bijection 
$\sigma:B\cup Y\rightarrow B_*\cup Y_\inin\cup Y_\outin$. 
Let us write $M(\sigma)$ for the set of all incidence
assignments $\mu:B\cup Y\rightarrow E$ that are {\em proper} in 
the sense that all of the following six requirements hold 
(cf.~\eqref{eq:quasi-laplacian}):
\begin{itemize}
\item (s): $\mu(s)$ is an in-arc to $s$;
\item (a): for all $u\in B$ with $\sigma(u)\in B_*$ it holds that $\mu(u)$ has
both of its vertices in $B$;
\item (b): for all $u\in B$ with $\sigma(u)\in Y_\inin$ it holds that $\mu(u)$
is an in-arc to $\underline{\sigma(u)}$;
\item (c): for all $u\in B$ with $\sigma(u)\in Y_\outin$ it holds that $\mu(u)$
is an out-arc from $\underline{\sigma(u)}$;
\item (d): for all $u\in Y$ with $\sigma(u)\in Y_\inin$ it holds that
$\mu(u)$ is an in-arc to $u$ and $u=\underline{\sigma(u)}$; and 
\item (e): for all $u\in Y$ with $\sigma(u)\in Y_\outin$ it holds that
$\mu(u)$ is an out-arc from $u$ and $u=\underline{\sigma(u)}$.
\end{itemize}

Observe that each $\mu\in M(\sigma)$ defines a collection of $n$ arcs 
$\mu(B\cup Y)=\{\mu(u):u\in B\cup Y\}$. Let us write 
$Z_\inin^\mu$ (respectively, $Z_\outin^\mu$) 
for the set of vertices in $B\cup Y$ with zero in-degree (out-degree) with
respect to the arcs in $\mu(B\cup Y)$. Since $\sigma$ is a bijection
and thus has a preimage for each $j\in Y_\inin\cup Y_\outin$, 
from (b,c,d,e) above it follows that $Z_\inin^\mu\subseteq B$ and
$Z_\outin^\mu\subseteq B$. Furthermore, from (a,b,c,d,e) it follows
that for the arcs in $\mu(B\cup Y)$
the sum of the in-degrees (and the sum of the out-degrees) of the
vertices in $B$ is $|B|=|B_*|+|Y|$. Thus, we have that 
$Z_\inin^\mu$ and $Z_\outin^\mu$ are both empty if and only if 
for the arcs $\mu(B\cup Y)$ both the in-degree and the out-degree of 
every vertex $u\in B\cup Y$ is $1$. (Note that the claim is immediate
for $u\in Y$ by (b,c,d,e) and bijectivity of $\sigma$.)

Let us now study the right-hand side of \eqref{eq:main-sieve} for
a fixed $\sigma$. Using (a,b,c,d,e) and \eqref{eq:quasi-laplacian} 
to rearrange in terms of incidence assignments, we have 
\begin{equation}
\label{eq:sigma-i-o-z-sum}
\sum_{I\subseteq V}\sum_{O\subseteq V}
\prod_{u\in B\cup Y}q^{I,O,s}_{u,\sigma(u)}
=
\sum_{\mu\in M(\sigma)}
\prod_{u\in B\cup Y}x_{\mu(u)}^{(\sigma(u))}
\sum_{I\subseteq Z_\inin^\mu}
\sum_{O\subseteq Z_\outin^\mu}
1\,.
\end{equation}
Since we are working in characteristic $2$, all other $\mu\in M(\sigma)$ 
except those for which $\mu(B\cup Y)$ is a cycle cover will cancel in 
the right-hand side of \eqref{eq:sigma-i-o-z-sum}. 

Take the sum of \eqref{eq:sigma-i-o-z-sum} over all bijections $\sigma$.
Consider an arbitrary cycle cover of $B\cup Y$. Let $C$ be a cycle in this
cycle cover. Assuming that $C$ does not contain $s$, we can realize
$C$ in an incidence assignment $\mu:B\cup Y\rightarrow E$ either using
\eqref{eq:in-realization} or \eqref{eq:out-realization}. If $C$ contains
$s$ and $\mu$ is proper, only the realization \eqref{eq:in-realization} 
is possible by (s). To see that realization with a proper $\mu\in M(\sigma)$ 
for some $\sigma$ is possible, consider an arbitrary $u\in B\cup Y$ and 
observe that each image $\mu(u)$ by (b,c,d,e) uniquely determines the 
image $\sigma(u)\in Y_\inin\cup Y_\outin$ when $\mu(u)$ has one vertex in 
$Y$; when $\mu(u)$ has both vertices in $B$, an unused 
$\sigma(u)\in B_*$ may be chosen arbitrarily so that $\mu\in M(\sigma)$. 
It follows that any cycle cover with $c$ cycles is realized 
as exacly $|B_*|!$ distinct monomials 
$\prod_{u\in B\cup Y}x_{\mu(u)}^{(\sigma(u))}$ in \eqref{eq:main-sieve}, 
each with coefficient $2^{c-1}$. This coefficient is nonzero 
if and only if $c=1$.

\subparagraph*{Completing the algorithm.}
To detect whether the given $n$-vertex directed graph $G$ 
admits a Hamiltonian cycle, first decompose 
the vertex set into disjoint $V=B\cup Y$ with $Y$ an independent set
of size $|Y|=\alpha(G)$ using the algorithm described earlier. 
Next, in time $O^*(1)$ construct an irreducible polynomial of degree 
$2\lceil\log_2 n\rceil$ over $\mathbb{F}_2$ (see e.g.~von zur Gathen
and Gerhard~\cite[\S14.9]{vonzurGathen2013}) to enable arithmetic in
the finite field of order $q=2^{2\lceil\log_2 n\rceil}\geq n^2$ in
time $O^*(1)$ for each arithmetic operation. 
Next, assign an independent uniform random value from $\mathbb{F}_q$ 
to each indeterminate $x_{uv}^{(j)}$ with $j\in B_*\cup Y_\inin\cup Y_\outin$
and $uv\in E$. Finally, using the assigned values for the indeterminates,
compute the left-hand side of \eqref{eq:main-sieve} using, for example,
Gaussian elimination to compute each determinant $\det Q^{I,O,s}$ 
in $O^*(1)$ operations in $\mathbb{F}_q$. Let us write $r\in\mathbb{F}_q$
for the result of this computation. In particular, we can compute $r$
from a given $G$ in total $O^*(3^{|B|})=O^*(3^{n(G)-\alpha(G)})$ operations
in $\mathbb{F}_q$, and consequently in total $O^*(3^{n(G)-\alpha(G)})$ time.
If \eqref{eq:main-sieve} is identically zero, then clearly $r=0$ 
with probability $1$. If \eqref{eq:main-sieve} is not identically zero
(and hence a homogeneous polynomial of degree $d=n$ 
in the indeterminates) then by the DeMillo--Lipton--Schwartz--Zippel 
lemma~\cite{DeMilloLipton1978,Schwartz1980,Zippel1979} we have $r\neq 0$ 
with probability at least $1-d/q\geq 1-n/n^2\geq 1-1/n=1-o(1)$. 
Thus we can decide whether $G$ is Hamiltonian based on whether $r\neq 0$.
In particular this gives probability $o(1)$ of reporting a false negative. 
This completes the proof of Theorem~\ref{thm:directed-hamiltonicity}. $\qed$

\subparagraph*{Acknowledgements.}

The research leading to these results has received 
funding from the Swedish Research Council grants VR 2012-4730 ``Exact 
Exponential-Time Algorithms'' and VR 2016-03855 ``Algebraic Graph Algorithms'' (A.B.), the 
European Research Council under the European Union's Seventh Framework 
Programme (FP/2007-2013) / ERC Grant Agreement 338077 ``Theory and 
Practice of Advanced Search and Enumeration'' (P.K.), and grant
NSF CAREER CCF-1149048 (I.K.).
Work done in part while the authors were at Dagstuhl Seminar 17041 in 
January 2017 and at the Simons Institute for the Theory of Computing 
in December 2015.

\newpage
\begin{center}
	{\LARGE \sffamily\bfseries APPENDIX}
\end{center}

\appendix

\section{Detecting a $k$-Leaf Out-Branching}

\label{appendix:k-leaf}

The {\em $k$-Leaf Out-Branching} problems asks, given an $n$-vertex directed
graph $G$ and an integer $k$ as input, whether $G$ has a 
spanning out-branching with at least $k$ leaves. Equivalently, we are asked 
to detect a spanning out-branching with at most $n-k$ internal vertices. 

Let us algebraize the $k$-Leaf problem by pursuing an analogy with 
the approach in Sect.~\ref{sect:branchings}. Without loss of generality
we may assume that $1\leq k\leq n$ and $n\geq 2$. 
Iterate over all choices for a root vertex $r\in V$. Since
$n\geq 2$ we have that $r$ must be an internal vertex in any out-branching
rooted at $r$. Introduce an indeterminate $y_u$ for each vertex $u\in V$.
Construct the symbolic Laplacian $L$ of $G$ given 
by \eqref{eq:laplacian} and with the assignment 
$x_{uv}=y_u$ to the indeterminate $x_{uv}$ for each $uv\in E$.
Puncture $L$ at $r$ to obtain $L_r$. Viewing $y_r \det L_r$ 
as a multivariate polynomial over the indeterminates $y_u$, 
from Theorem~\ref{thm:multivariate-out-branching}
it follows that $G$ has a spanning out-branching rooted at $r$ with 
at most $n-k$ internal vertices if and only if 
$y_r\det L_r$ (which is a polynomial that is either identically zero 
or homogeneous of degree $n$ in the 
indeterminates $y_u$) has a monomial with at most $n-k$ 
of the $n$ indeterminates $y_u$.

This motivates the following parameterized monomial detection problem. 
Namely, the $(n-k)$-{\em DV} problem asks, given as input a polynomial 
$P\in \mathbb{Z}[y_1,y_2,\ldots,y_n]$ that is homogeneous of degree $n$ 
and with nonnegative integer coefficients, whether $P$ has 
a monomial with at most $n-k$ distinct indeterminates.

\medskip

The $k$-DV problem was discussed in~\cite{FloderusLPS15} and it was 
shown that it is W[2]-hard via a reduction from the parameterized hitting 
set problem. However, the $(n-k)$-DV problem is fixed-parameter tractable 
with respect to $k$. We prove the following:
\begin{theorem}[Detecting a monomial with few indeterminates] \label{th:dvm}
	There is a randomized algorithm that decides the $(n-k)$-DV problem 
        using $O^*(2^{k+s})$ evaluations of $P$
	and with negligible probability for a false negative, where 
	$s\leq k$ is the minimum number of indeterminates with degree greater 
        than $1$ taken over all monomials of $P$ with at most $n-k$ distinct 
        indeterminates. 
\end{theorem}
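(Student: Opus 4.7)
My plan is to reduce the $(n-k)$-DV problem to multilinear monomial detection, which admits $O^*(2^d)$-evaluation algorithms for target degree $d$ via~\cite{Bjorklund2016,Koutis2016}. For each candidate value $s' \in \{0, 1, \ldots, k\}$ of the parameter $s$, I would construct a transformed polynomial $\tilde P_{s'}$ in $y_1,\ldots,y_n$ and fresh indeterminates $u_1,\ldots,u_{k+s'}$ by the substitution $y_i \leftarrow y_i + \sum_{j=1}^{k+s'} \alpha_{i,j} u_j$, with independent uniform random scalars $\alpha_{i,j}$ drawn from a sufficiently large extension field. I would then apply multilinear monomial detection to $\tilde P_{s'}$ with target degree $k+s'$ restricted to the $u$-variables, using $O^*(2^{k+s'})$ evaluations of $\tilde P_{s'}$ (each of which reduces to $O^*(1)$ evaluations of $P$). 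Starting at $s' = 0$ and stopping at the first success (which occurs at $s' = s$) yields total evaluation cost $\sum_{s' = 0}^{s} O^*(2^{k+s'}) = O^*(2^{k+s})$.

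The correspondence I aim to establish is as follows. Expanding $\prod_i \bigl(y_i + \sum_j \alpha_{i,j} u_j\bigr)^{a_i}$ and isolating the coefficient of $u_1 u_2 \cdots u_{k+s'}$ yields a sum over allocations of the $k + s'$ slots into the factor powers, with at most $a_i$ slots per factor. For a good monomial $m$ with $r \leq n-k$ distinct indeterminates and $s$ indeterminates of degree $\geq 2$, a natural allocation places one ``anchor'' slot on each high-degree indeterminate and distributes the remaining $k$ slots through the excess pool of size $\sum_{\text{high}}(a_i - 1) = n - r \geq k$; the resulting contribution is a nonzero product of random $\alpha$'s times a $y$-monomial, and hence survives with high probability by the Schwartz--Zippel lemma. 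Completeness of the reduction therefore follows: a witness good monomial with $\leq s'$ high-degree indeterminates produces a nonvanishing multilinear-in-$u$ coefficient of degree $k+s'$ in $\tilde P_{s'}$.

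The main obstacle I anticipate is ruling out (or harmlessly absorbing) spurious contributions from bad monomials---those with more than $n-k$ distinct or more than $s'$ high-degree indeterminates---so that the multilinear detection does not return a false negative at the desired $s' = s$. I would address this by refining the substitution to partition the $k + s'$ fresh indeterminates into an anchor block of $s'$ and an excess block of $k$ with independent randomness, so that a nonvanishing multilinear-in-$u$ coefficient forces a valid anchor allocation (constraining the number of high-degree indeterminates to be $\leq s'$) together with a valid excess allocation (forcing $n - r \geq k$). Combined with the one-sided error of the multilinear detection algorithm, this combinatorial bookkeeping is what would deliver negligible false-negative probability in line with the theorem statement.
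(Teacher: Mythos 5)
There is a genuine gap: your reduction to multilinear monomial detection is unsound. After the substitution $y_i \leftarrow y_i + \sum_j \alpha_{i,j}u_j$, the coefficient of $u_1u_2\cdots u_{k+s'}$ receives a contribution from \emph{every} monomial of $P$ of total degree at least $k+s'$, in particular from monomials with many distinct indeterminates: for the multilinear monomial $y_1y_2\cdots y_n$ one simply assigns the $k+s'$ slots to $k+s'$ distinct factors. Since $P$ is homogeneous of degree $n$ with nonnegative coefficients, that coefficient is generically nonzero whenever $P\neq 0$, so your detector answers YES on NO-instances; as the theorem asks for an algorithm that decides the problem (with error only on the negative side), this is fatal rather than a matter of amplification. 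The anticipated fix --- splitting the fresh variables into an ``anchor block'' of size $s'$ and an ``excess block'' of size $k$ --- is not a mechanism: under any substitution of the form $y_i\leftarrow f_i(y_i,u)$ the image of a factor $y_i^{a_i}$ can absorb up to $a_i$ distinct $u$-variables regardless of whether copies of $y_i$ remain, so nothing ties the multilinear-in-$u$ degree to the excess $n-r$ (equivalently, to the number of distinct indeterminates), which is precisely the quantity the $(n-k)$-DV problem is about.

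The paper's proof works by a different, elementary device that supplies exactly the soundness you are missing: it multiplies $P$ by the complementary product $w_1w_2\cdots w_n$ and applies a random two-letter substitution, sending $(y_i,w_i)$ to $(a,b)$ or $(b,a)$ with probability $1/2$ each, and then tests divisibility by $(ab)^{n-k+1}$. Every indeterminate present in a monomial is paired with its own $w_i$ and contributes a factor $ab$, so any monomial with at least $n-k+1$ distinct indeterminates always maps to a multiple of $(ab)^{n-k+1}$; hence there are no false positives at all. A good monomial with $n-k'$ distinct indeterminates, $s$ of them of degree at least $2$, escapes this divisibility when its $s+k'$ remaining ``free'' indeterminates (the surplus $y$-powers and the unmatched $w_j$'s) all land on the same letter, an event of probability roughly $2^{-(k+s)}$, and nonnegativity of the coefficients rules out cancellation; repeating $O^*(2^{k+s})$ times gives the claimed guarantee. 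If you wish to stay within the multilinear-detection framework, you would need an analogous ingredient that records which indeterminates are \emph{absent} (or that caps the $u$-capacity of each factor at $a_i-1$); your linear substitution provides neither.
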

\begin{proof}
We introduce a set of extra indeterminates 
$W = w_1,w_2,\ldots,w_n$, and two special
indeterminates $a,b$. We will substitute these
two indeterminates into $X$ and $Y$ and compute
a homogeneous polynomial in $\{a,b\}$.  
The computation can be done either symbolically or with a sufficient
number of numerical evaluations and interpolation. 

The algorithm iterates the following two steps $O^*(2^{k+s})$ times: 
\begin{enumerate}
\item
For each $i=1,2,\ldots,n$ independently, substitute: \\
either $(y_i\leftarrow a, w_i\leftarrow b)$ or 
$(y_i\leftarrow b, w_i\leftarrow a)$, each with probability $1/2$.
\item
Perform the random substitution on the polynomial
$P(y_1,y_2,\ldots,y_n)w_1w_2\cdots w_n$ to obtain a homogeneous polynomial
of degree $2n$ in $a$ and $b$. If the result is {\em not} divisible 
by $(ab)^{n-k+1}$ then halt with output YES.
\end{enumerate}
If none of the iterations halts, then halt with output NO.	

Let us now analyze the algorithm. First, consider the effect of the substitution to a monomial with at least $n-k+1$ of the indeterminates $y_1,y_2,\ldots,y_n$.Since the monomial gets multiplied by $w_1w_2\cdots w_n$, the substitution yields a monomial in $a,b$ that is a multiple of $(ab)^{n-k+1}$. Thus, if all monomials of $P$ have at least $n-k+1$ indeterminates then by linearity the result of substitution to $P(y_1,y_2,\ldots,y_n)w_1w_2\cdots w_n$ is a multiple of $(ab)^{n-k+1}$. Thus, the algorithm halts with output NO.

Next, consider a monomial with exactly $n-k'$ indeterminates, among which exactly $s$ have degree at least $2$. Because the monomial has degree $n$, we have $s\leq k'$.
The monomial is thus of the form
\[
y_{i_1}^{d_1}y_{i_2}^{d_2}\cdots y_{i_s}^{d_s}y_{i_{s+1}}\cdots y_{i_{n-k'}}
\]
with $d_1,d_2,\ldots,d_s\geq 2$. 
Multiplying with $w_1w_2\cdots w_n$ we thus have
\[
\left(\prod_{t=1}^{n-k'} (y_{i_t} w_{i_t})\right)  \left(y_{i_1}^{d_1-1} \cdots y_{i_s}^{d_s-1}\right) \left(\prod_{t=n-k'+1}^n w_{i_t} \right)\,.
\]
Under the random substitution the first product evaluates to $(ab)^{n-k'}$. Now note that each of the remaining indeterminates in the product has randomly and independently been assigned to $a$ or $b$, with probability $1/2$. There are $s+k'$ such indeterminates. It follows that with probability at least $2^{-(k'+s)}$ the monomial becomes $(ab)^{n-k'} a^{k}$ or $(ab)^{n-k'} b^{k'}$. Thus, when $k'\geq k$ the resulting monomial is not divisible by $(ab)^{n-k+1}$ with probability at least $2^{-(k+s)}$. Furthermore, by linearity the result of the substitution to $P(y_1,y_2,\ldots,y_n)w_1w_2\cdots w_n$ is not divisible either since $P$ has nonnegative coefficients and thus cancellations are not possible. Repeating $O^*(2^{k+s})$ times renders negligible the probability of a false negative.
\end{proof}

We thus have the following:
\begin{theorem}[Detecting a $k$-leaf out-branching]
	The $k$-Leaf Out-Branching problem can be decided with a randomized algorithm in $O^*(2^{k+s})$ time, where $s$ is the minimum number of internal vertices with out-degree greater than 1 taken over all out-branchings with at least $k$ leaves. The algorithm has a negligible probability of reporting a false negative.
\end{theorem}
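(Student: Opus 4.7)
The plan is to combine the algebraization sketched in the paragraph preceding the theorem with the DV-detection algorithm of Theorem~\ref{th:dvm}. For each candidate root $r\in V(G)$, form the polynomial $P_r=y_r\det L_r$, where $L_r$ is the symbolic Laplacian of $G$ punctured at $r$ under the substitution $x_{uv}=y_u$. By Theorem~\ref{thm:multivariate-out-branching} this polynomial is either identically zero or homogeneous of degree $n$, with one monomial per spanning out-branching $H\in\mathscr{B}_r$; this monomial is $y_r\prod_{uv\in E(H)}y_u=y_r^{d_H(r)+1}\prod_{u\neq r}y_u^{d_H(u)}$, where $d_H(u)$ denotes the out-degree of $u$ in $H$. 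Since $P_r$ has nonnegative integer coefficients and can be evaluated in polynomial time via the Berkowitz determinant circuit, I run the algorithm of Theorem~\ref{th:dvm} on $P_r$ for each root $r$.

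Correctness rests on observing that the set of distinct indeterminates appearing in the monomial of $H$ is exactly the set of internal vertices of $H$: for $u\neq r$ the indeterminate $y_u$ appears iff $d_H(u)\geq 1$, while the extra factor $y_r$ always contributes $r$ (which is internal in every spanning out-branching when $n\geq 2$). Hence $H$ has at least $k$ leaves iff its monomial has at most $n-k$ distinct indeterminates. Thus $P_r$ contains a monomial witnessing the $(n-k)$-DV problem iff $G$ admits an out-branching rooted at $r$ with at least $k$ leaves, and returning YES as soon as any root produces a positive DV outcome decides $k$-Leaf Out-Branching. False-negative probabilities remain negligible after a union bound over the polynomially many DV invocations used below.

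For the running-time bound, I upper bound the DV parameter $s_{\mathrm{DV}}(P_r)$ of Theorem~\ref{th:dvm} in terms of the combinatorial parameter $s$ of the current theorem. In the monomial for $H$, the exponent of $y_r$ is $d_H(r)+1\geq 2$, while for $u\neq r$ the exponent of $y_u$ is $d_H(u)$, which is at least $2$ iff $u$ is internal with out-degree $>1$. Writing $s(H)$ for the number of internal vertices of $H$ with out-degree $>1$, the number of degree-$>1$ indeterminates in the monomial is $s(H)+1-[d_H(r)\geq 2]\leq s(H)+1$. Let $H^*$ achieve the minimum $s^*$ from the theorem statement and let $r^*$ be its root; then $s_{\mathrm{DV}}(P_{r^*})\leq s^*+1$. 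To avoid needing to know $s^*$ in advance, I run the DV algorithm adaptively: for $s=0,1,2,\ldots$ in turn, sweep all $n$ roots and halt at the first YES. By the time the sweep reaches the pair $(r^*,s^*{+}1)$ the DV algorithm returns YES with high probability, and the cumulative work is $\sum_{t=0}^{s^*+1}n\cdot O^*(2^{k+t})=O^*(2^{k+s^*})$, with the constant $+1$ and the polynomial factor $n$ absorbed into $O^*$.

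The main obstacle is the careful bookkeeping in translating the algebraic parameter $s_{\mathrm{DV}}(P_r)$ into the combinatorial parameter $s$: the extra factor $y_r$ inserted to force $r$ among the distinct indeterminates introduces a systematic $+1$ that must be absorbed, and the outer loop over $s$ is needed to obtain the parameterized bound without advance knowledge of $s^*$. For NO-instances $s^*$ is undefined, in which case the adaptive loop simply caps at $s=k$, the algorithm terminates in $O^*(2^{2k})$ time, and it correctly reports NO.
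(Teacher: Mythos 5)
Your proposal is correct and follows essentially the same route as the paper: form $P_r=y_r\det L_r$ under $x_{uv}=y_u$ for each root $r$, identify internal vertices of out-degree greater than $1$ with degree-$>1$ indeterminates, and apply Theorem~\ref{th:dvm}. Your extra bookkeeping (the systematic $+1$ from the forced $y_r$ factor, absorbed as a constant factor, and the adaptive sweep over $s$ capped at $k$ for NO-instances) only fills in details the paper leaves implicit, consistent with its stated $O^*(4^k)$ worst case.
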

\begin{proof} 
Consider each root vertex $r\in V$ in turn and let $P$ be the 
polynomial $y_r\det L_r$ introduced in the beginning of this section.
For any fixed out-branching, the internal vertices with out-degree greater than $1$ correspond to variables with degree greater than $1$ in the associated 
monomial. The theorem follows by a direct application of Theorem~\ref{th:dvm}.
\end{proof}

\subparagraph*{Note.} As observed by M.~Koivisto, the algorithm can be somewhat improved, by skewing the probabilities in the substitution $(y_i\leftarrow a, w_i\leftarrow b)$ or $(y_i\leftarrow b, w_i\leftarrow a)$ to $p$ and $1-p$ respectively, where $p=k/(k+s)$. Using a very similar argument one can show that the problem can be decided with $O^*(c^{k+s})$ evaluations of $P$, where $c^{-1}=p^p(1-p)^{1-p}$. For example, when $s=k/2$ the running time becomes $O^*(1.89^{s+k})$, and when $s$ is some fixed constant (that is, when a tree has a small number of internal branching vertices) then the algorithm runs in polynomial time. The worst-case running time of the algorithm is $O^*(2^n)$ when $k=s=n/2$, but it gets smaller when $k>n/2$, nearly matching exhaustive search at the W[2]-hard side of the range, that is, when $n-k$ is constant. 

\medskip
The best known algorithm for the directed $k$-leaf problem is based on a branch-and-bound approach and  runs in $O^*(3.72^k)$ time~\cite{Daligault2010}. Our algorithm runs in $O^*(4^k)$ time in the worst case, but it can adapt to properties of the input and actually be faster for instances containing a $k$-leaf tree with at most $10k/11$ internal vertices of out-degree greater than 1.

\section{Corollaries and variants of Theorem~\ref{thm:counting-modulo-prime-power}}

\label{appendix:counting-modulo-prime-power}

This appendix proves corollaries and variants
of Theorem~\ref{thm:counting-modulo-prime-power} to demonstrate 
the versatility of the approach.

\subparagraph*{Boosting the modulus for counting by Chinese remaindering.}
From number theory we know
(Merten's First Theorem, cf.~Tenenbaum~\cite[Theorem~7]{Tenenbaum1995}) that
if we take the sum over the primes $p$ at most $q>2$, we have
\[
\sum_{p\leq q} \frac{\ln p}{p}\geq \ln q - 2\,.
\]
Thus, using Theorem~\ref{thm:counting-modulo-prime-power} for all 
primes $p$ at most $q$, from the Chinese Remainder
Theorem we obtain that there exists a constant $c>0$ such for all 
large enough integers $q$ we can in expected time 
$O^*(2^{n(1-cn/(q\log q))})$ compute the number of Hamiltonian 
cycles in a given $n$-vertex directed graph modulo an integer $M$
that satisfies
\[
M \geq \prod_{p\leq q}p^{\lfloor n/(3p)\rfloor}
\geq 
   \exp\biggl(\frac{1}{4}n\sum_{p\leq q}\frac{\ln p}{p}\biggr)
\geq
   (e^{-2}q)^{n/4}\,.
\]

\subparagraph*{Counting Hamiltonian cycles when there are few.}
The following result is an immediate corollary of modulus-boosting:

\begin{theorem}[Counting Hamiltonian cycles when there are few]
\label{thm:d-pow-n}
There is a randomized algorithm and a constant $c>0$ such that for all 
constants $d>1$ the following holds: when given an $n$-vertex directed 
graph $G$ with at most $d^n$ Hamiltonian cycles as input, the algorithm
computes the exact number of Hamiltonian cycles in $G$ in 
$O^*(2^{n(1-c/d^5)})$ expected time.
\end{theorem}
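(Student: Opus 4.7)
The plan is to instantiate Theorem~\ref{thm:counting-modulo-prime-power} with a carefully chosen parameter $\lambda=\lambda(d)\in(0,1)$ at every prime $p\leq q$ for a suitable threshold $q=q(d)$, and then recover the exact count by Chinese remaindering. Since $G$ has at most $d^n$ Hamiltonian cycles, the exact count is determined by its residue modulo any integer $M>d^n$.

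Concretely, for each prime $p\leq q$ the theorem yields the residue modulo $p^{\lfloor(1-\lambda)n/(3p)\rfloor}$ in expected time $O^*(2^{n(1-\lambda^2/(19p\log_2 p))})$. Combining these residues by CRT gives the count modulo $M=\prod_{p\leq q}p^{\lfloor(1-\lambda)n/(3p)\rfloor}$. Using Mertens's first theorem ($\sum_{p\leq q}(\log p)/p\geq\log q-2$) together with the Chebyshev bound $\theta(q)=O(q)$, for $n$ large relative to $q$ one obtains $\log M\geq \tfrac{(1-\lambda)n}{3}(\log q-2)-O(q)$, so the condition $M\geq d^n$ is guaranteed as soon as $q\geq e^{2}d^{3/(1-\lambda)}$.

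Since the per-prime running time is monotone in $p$, the total cost is $O^*(2^{n(1-\lambda^2/(19q\log_2 q))})$; plugging in $q=\Theta(d^{3/(1-\lambda)})$ gives an exponent reduction of order $\lambda^{2}/(d^{3/(1-\lambda)}\log d)$. The main obstacle is to balance the two competing effects of $\lambda$: enlarging $\lambda$ improves the $\lambda^{2}$ factor in the numerator but blows up the $d^{3/(1-\lambda)}$ factor in the denominator super-polynomially in $d$, whereas shrinking $\lambda$ keeps the latter near $d^{3}$ but kills the former. The sweet spot is $\lambda=\Theta(1/\log d)$: since $d^{1/\log_{2} d}=2$, this yields $d^{3/(1-\lambda)}=\Theta(d^{3})$ and $\lambda^{2}=\Theta(1/(\log d)^{2})$, for a combined reduction of $\Omega(1/(d^{3}(\log d)^{3}))$.

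For all $d$ above some absolute threshold $d_{0}$ we have $(\log d)^{3}\leq d^{2}$, so the reduction is $\Omega(1/d^{5})$ and the running time is $O^*(2^{n(1-c/d^{5})})$ for some $c>0$ in this regime. For the finite range $1<d\leq d_{0}$ any fixed constant speedup produced by the construction above with some constant $\lambda$ and $q=O_{d}(1)$ yields an $O^*(2^{n(1-c_{d_{0}})})$ bound, which is absorbed into $O^*(2^{n(1-c/d^{5})})$ after shrinking $c$. Merging the two regimes gives a single universal constant $c>0$ as claimed.
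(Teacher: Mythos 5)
Your proposal is correct and follows essentially the same route as the paper: apply Theorem~\ref{thm:counting-modulo-prime-power} at every prime $p\leq q$, combine by the Chinese Remainder Theorem, and use Mertens's theorem to force the combined modulus above $d^n$. The only difference is the parameter trade-off --- the paper simply fixes a constant $\lambda$ and takes $q=e^2d^4$ (reduction $\Theta(1/(d^4\log d))$), whereas you take $\lambda=\Theta(1/\log d)$ and $q=\Theta(d^3)$ (reduction $\Theta(1/(d^3(\log d)^3))$) --- and both land within the claimed $\Omega(1/d^5)$ bound.
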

\begin{proof}
Boost the modulus with $q=e^2d^{4}$ to observe that $M>d^n$
for all large enough $n$.
\end{proof}

\begin{theorem}[Counting Hamiltonian cycles with bounded average out-degree]
There is a randomized algorithm and a constant $c>0$ such that for all 
constants $d>1$ the following holds: when given an $n$-vertex directed 
graph $G$ with average out-degree at most $d$, the algorithm computes the 
exact number of Hamiltonian cycles in $G$ in $O^*(2^{n(1-c/d^5)})$ 
expected time.
\end{theorem}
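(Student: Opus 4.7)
The plan is to reduce the theorem directly to Theorem~\ref{thm:d-pow-n}, which already yields expected running time $O^*(2^{n(1-c/d^5)})$ on any $n$-vertex directed graph having at most $d^n$ Hamiltonian cycles. Thus it suffices to establish the combinatorial bound that every $n$-vertex directed graph $G$ whose average out-degree is at most $d$ contains at most $d^n$ Hamiltonian cycles; the algorithm of the present theorem is then simply the algorithm of Theorem~\ref{thm:d-pow-n} invoked on the input $G$ with no modification.

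To establish this bound, I would argue combinatorially. Every Hamiltonian cycle $C$ in $G$ induces a function $\mu_C:V(G)\to E(G)$ that sends each vertex $v$ to the unique arc of $C$ leaving $v$. Distinct Hamiltonian cycles give rise to distinct such functions, since $C$ is recovered from $\mu_C$ as the image $\mu_C(V(G))$. The total number of functions that assign to each vertex $v$ some out-arc at $v$ is exactly $\prod_{v\in V(G)}\deg^+(v)$. Applying the AM--GM inequality to the out-degree sequence, together with the hypothesis $\sum_{v}\deg^+(v)\leq dn$, yields
\[
\prod_{v\in V(G)}\deg^+(v)
\leq\biggl(\frac{1}{n}\sum_{v\in V(G)}\deg^+(v)\biggr)^{n}
\leq d^n,
\]
which is the required bound. (If some vertex has out-degree zero, the left-hand side is zero and $G$ has no Hamiltonian cycle at all, so the statement holds trivially.)

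Plugging this bound into Theorem~\ref{thm:d-pow-n} and returning its output completes the proof. The only step that could plausibly constitute an obstacle is the injectivity of the correspondence $C\mapsto\mu_C$, and that is essentially immediate: the arc set of $C$ is precisely $\mu_C(V(G))$. Once this observation is in hand, both the AM--GM step and the appeal to Theorem~\ref{thm:d-pow-n} are routine, and no further probabilistic or algebraic machinery is required beyond what has already been developed for Theorem~\ref{thm:counting-modulo-prime-power} and its corollary Theorem~\ref{thm:d-pow-n}.
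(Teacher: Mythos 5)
Your proposal is correct and follows essentially the same route as the paper: bound the number of Hamiltonian cycles by $\prod_{v}\deg^+(v)$, apply the arithmetic--geometric mean inequality to get the bound $d^n$, and invoke Theorem~\ref{thm:d-pow-n}. Your explicit injectivity argument for the correspondence $C\mapsto\mu_C$ is a detail the paper leaves implicit, but there is no substantive difference.
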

\begin{proof}
For each vertex $u\in V(G)$, let $d_u$ be the out-degree of $u$.
We have that $G$ has at most $\prod_{u\in V(G)} d_u$ Hamiltonian
cycles. On the other hand, the average degree of $G$ is 
$\frac{1}{n}\sum_{u\in V} d_u$. From the inequality between arithmetic 
and geometric means it follows that there are at most $d^n$ Hamiltonian
cycles in $G$, and Theorem~\ref{thm:d-pow-n} applies.
\end{proof}

\bibliographystyle{plain}

\end{document}